\documentclass[a4paper,UKenglish]{lipics-v2019}
%This is a template for producing LIPIcs articles. 
%See lipics-manual.pdf for further information.
%for A4 paper format use option "a4paper", for US-letter use option "letterpaper"
%for british hyphenation rules use option "UKenglish", for american hyphenation rules use option "USenglish"
%for section-numbered lemmas etc., use "numberwithinsect"
%for enabling cleveref support, use "cleveref"
%for enabling cleveref support, use "autoref"
\newcommand{\h}{\hspace*{0.2in}}

%VJ: TODO
%1. Add that communication complexity can be reduced by using techniques proposed in SSS by Bridgman and Garg.
% 2. Early termination
%3. Add line numbers to the main algorithm and use them when explaining the algorithm.
%4. relationship with Byzantine generalized lattice agreement paper
%5. Ben-Or's paper
%6. Future work: lower bounds
\usepackage{thm-restate}
\usepackage{makecell}
\nolinenumbers
%\graphicspath{{./graphics/}}%helpful if your graphic files are in another directory

\bibliographystyle{plainurl}% the mandatory bibstyle

\title{Byzantine Lattice Agreement in Synchronous Systems} %TODO Please add

\titlerunning{Byzantine Lattice Agreement in Synchronous Systems}%optional, please use if title is longer than one line

\author{Xiong Zheng}{University of Texas at Austin, Austin, TX 78712, USA}{zhengxiongtym@utexas.edu}{}{}

\author{Vijay K. Garg}{University of Texas at Austin, Austin, TX 78712, USA}{garg@ece.utexas.edu}{}{}

\authorrunning{X.\,Zheng, and V.\,K. Garg} %mandatory. First: Use abbreviated first/middle names. Second (only in severe cases): U

\Copyright{Xiong Zheng, and Vijay K. Garg}%mandatory, please use full first names. LIPIcs license is "CC-BY";  http://creativecommons.org/licenses/by/3.0/

\ccsdesc[100]{Theory of Computation~Distributed Algorithms}

\keywords{Lattice Agreement, Byzantine, Gradecast}%TODO mandatory; please add comma-separated list of keywords

\category{}%optional, e.g. invited paper

\relatedversion{}%optional, e.g. full version hosted on arXiv, HAL, or other respository/website
%\relatedversion{A full version of the paper is available at \url{...}.}

\supplement{}%optional, e.g. related research data, source code, ... hosted on a repository like zenodo, figshare, GitHub, ...

%\funding{(Optional) general funding statement \dots}%optional, to capture a funding statement, which applies to all authors. Please enter author specific funding statements as fifth argument of the \author macro.

\acknowledgements{}%optional

\begin{document}

\maketitle

%TODO mandatory: add short abstract of the document
\begin{abstract}
In this paper, we study the Byzantine lattice agreement problem in synchronous distributed message passing systems. The lattice agreement problem \cite{attiya1995atomic} in crash failure model has been studied both in synchronous and asynchronous systems %For synchronous systems, The paper [Faleiro et al. PODC, 2012] gives the first algorithm in asynchronous systems, which takes $O(n)$ rounds, where $n$ is the number of processes in the system. Later on, Xiong et al in [Xiong et al. DISC, 2018] presents an improved algorithm which takes only $O(f)$ rounds, where $f$ is the maximum number crash failures the system can tolerate such that $n \geq 2f + 1$. The current best upper bound is achieved in [Xiong et al. OPODIS, 2019], which runs in $O(\log f)$ rounds.
\cite{attiya1995atomic, faleiro2012generalized, zheng2018lattice, zheng2018linearizable}, which leads to the current best upper bound of $O(\log f)$ rounds both in synchronous and asynchronous systems. Its applications in building linearizable replicated state machines has also been further explored recently in \cite{faleiro2012generalized, skrzypczak2019linearizable, zheng2018linearizable}. However, very few algorithmic results are known for the lattice agreement problem in Byzantine failure model. The paper by Nowak et al~\cite{nowak2019byzantine} first gives an algorithm for a variant of the lattice agreement problem on cycle-free lattices that tolerates up to $f < n/(h(X) + 1)$ Byzantine faults, where $n$ is the number of processes and $h(X)$ is the height of the input lattice $X$. The recent preprint by Di Luna et al~\cite{di2019byzantine} studies this problem in asynchronous systems and slightly modifies the validity condition of the original lattice agreement problem in order to accommodate extra values sent from Byzantine processes. They present a $O(f)$ rounds algorithm by using the reliable broadcast primitive as a first step and following the similar algorithmic framework as in \cite{faleiro2012generalized, zheng2018lattice}.  

In this paper, we propose three algorithms for the Byzantine lattice agreement problem in synchronous systems. The first algorithm runs in $\min \{3h(X) + 6,6\sqrt{f} + 6\})$ rounds and takes $O(n^2 \min\{h(X), \sqrt{f}\})$ messages, where $h(X)$ is the height of the input lattice $X$, $n$ is the total number of processes in the system and $f$ is the maximum number of Byzantine processes such that $n \geq 3f + 1$. The second algorithm takes $3\log n + 3$ rounds and $O(n^2 \log n)$ messages. The third algorithm takes $4 \log f + 3$ rounds and $O(n^2 \log f)$ messages. All algorithms can tolerate $f < \frac{n}{3}$ Byzantine failures. %With the presence of Byzantine failures, extra work needs to be done to prevent Byzantine processes from sending arbitrary values. 
In our algorithms, we apply a slightly modified version of the Gradecast algorithm given by Feldman et al \cite{feldman1988optimal} as a building block. If we use the Gradecast algorithm for authenticated setting given by Katz et al \cite{katz2006expected}, we obtain algorithms for the Byzantine lattice agreement problem in authenticated settings and tolerate $f < \frac{n}{2}$ failures. 
\end{abstract}

\section{Introduction}
The lattice agreement problem, introduced by Attiya et al~\cite{attiya1995atomic}, is an important decision problem in shared-memory and message passing systems. In this problem, processes start with input values from a lattice and need to decide values which are comparable to each other. Specifically, suppose each process $i$ has input $x_i$ from a lattice ($X$, $\leq$, $\sqcup$) with the partial order $\leq$ and the join operation $\sqcup$, it has to output a value $y_i$ also in $X$ such that the following properties are satisfied. 

1) {\bf Downward-Vaility}: $x_i \leq y_i$ for each correct process $i$. 

2) {\bf Upward-Validity}: $y_i \leq \sqcup \{x_i ~ | ~ i \in [n]\}$. 

3) {\bf Comparability}: for any two correct processes $i$ and $j$, either $y_i \leq y_j$ or $y_j \leq y_i$.\\ 

In shared-memory systems, algorithms for the lattice agreement problem can be directly applied to solve the atomic snapshot problem \cite{afek1993atomic, attiya1995atomic}. This was the initial motivation for studying this problem. The application of lattice agreement in message passing systems has been explored only recently. Failero et al \cite{faleiro2012generalized} were the first to apply lattice agreement for building a special class of linearizable replicated state machines, which can support query operation and update operation, but not mixed query and update operation. Traditionally, consensus based protocols are applied to build linearizable replicated state machines. However, consensus based protocols do not provide termination guarantee in the presence of failures in the system, since the consensus problem cannot be solved with even one failure in an asynchronous system \cite{fischer1985impossibility}. The lattice agreement instead has been shown to be a weaker decision problem than consensus. It can be solved in an asynchronous system when a majority of processes is correct. Thus, linearizable replicated state machines built based on lattice agreement protocols have the advantage of termination even with failures. Another application of lattice agreement in distributed systems is to build atomic snapshot objects. Efficient implementation of atomic snapshot objects in crash-prone asynchronous message passing systems is important because they can make design of algorithms in such systems easier. The paper \cite{attiya1995atomic} presents a general technique for applying algorithms for the lattice agreement problem to solve the atomic snapshot problem. By using the same technique, algorithms for lattice agreement problem in distributed systems can be directly applied to implement atomic snapshot objects in crash-prone message passing systems. Essentially, an atomic snapshot object needs to provide linearizabilty for all processes, which basically decides on some total ordering of operations. In the lattice agreement problem, processes need to output values which lie in a chain of the input lattice, which is also a total ordering. 

It has been shown that the lattice agreement problem is a weaker decision problem than consensus in the crash-failure model. In synchronous systems, consensus cannot be solved in fewer than $f + 1$ rounds~\cite{dolev1983authenticated},  but lattice agreement can be solved in $\log f$ rounds~\cite{zheng2018lattice}. In asynchronous systems, the consensus problem cannot be solved even with one failure~\cite{fischer1985impossibility}, whereas the lattice agreement problem can be solved in asynchronous systems when a majority of processes is correct~\cite{faleiro2012generalized}. In asynchronous systems with Byzantine failures, the recent preprint~\cite{di2019byzantine} has shown that the lattice agreement problem, with a slightly modified Upward-Validity definition, can be solved in $O(f)$ rounds, whereas Byzantine consensus cannot be solved even with one failure \cite{lamport1982byzantine}. For synchronous systems, we show in this work that the Byzantine lattice agreement problem can be solved in $O(\sqrt{f})$ rounds and $O(\log n)$ rounds. Byzantine consensus, however, cannot be solved in fewer than $f + 1$ rounds \cite{dolev1983authenticated}, where $f$ is the maximum number of Byzantine processes in the system such that $n \geq 3f + 1$. 

%It is natural to consider the lattice agreement problem with the presence of Byzantine failures in message passing systems, since Byzantine failures are prevalent in practice and notoriously hard to tolerate. The Byzantine consensus algorithm \cite{lamport1982byzantine} ensures that each correct process agrees on the same value even when Byzantine processes exist. 

%Decision problems in distributed systems typically become harder with the presence of Byzantine failures. Studying the lattice agreement problem in the Byzantine failure model provides new insight in building a replicated state machine which can tolerate Byzantine failures. Protocols for the Byzantine lattice agreement may also be applied to the block chain technology. 
Our main contribution in this paper is summarized as below.
\begin{restatable}{theorem}{mainTheorem} \label{main_theorem}
There is a $\min\{3h(X) + 6,6\sqrt{f} + 6\})$ rounds early stopping algorithm for the Byzantine lattice agreement problem in synchronous systems, which can tolerate $f < \frac{n}{3}$ Byzantine failures. The term $h(X)$ is the height of the input lattice $X$. The algorithm takes $O(n^2 \min\{h(X), \sqrt{f}\})$ messages.
\end{restatable}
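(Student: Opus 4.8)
The plan is to prove the theorem constructively: I would exhibit a synchronous protocol built on top of the (modified) Gradecast primitive and then verify the three lattice-agreement properties together with the two complexity bounds. The protocol is organized into \emph{phases}, each consisting of one invocation of Gradecast by every process; since Gradecast runs in three rounds and costs $O(n^2)$ messages, a bound of $k$ phases immediately yields $3k$ rounds and $O(kn^2)$ messages. First I would record, as a lemma, the three Gradecast guarantees I rely on: a correct sender's value is delivered to every correct process with the top grade; if some correct process delivers a value with the top grade then every correct process delivers that same value with a positive grade; and any two correct processes that assign positive grades to the same sender deliver identical values. In each phase every process Gradecasts its current estimate $v_i$ (initialized to $x_i$), then joins into $v_i$ precisely those delivered values whose grade exceeds a fixed threshold. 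A process tentatively decides once a phase leaves $v_i$ unchanged, and commits after one confirming phase.

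With this structure, Downward-Validity is immediate, since $v_i$ is only ever raised by joins, so $x_i \le v_i = y_i$. Upward-Validity, in the modified sense appropriate to the Byzantine setting, follows from the grade threshold: a value is absorbed only when Gradecast has ``vouched'' for it, which is what bounds the contribution of the Byzantine set to the final estimates. The delicate correctness property is Comparability, and for it I would establish the invariant that when a correct process commits a value $y$, it has in the preceding phase Gradecast $y$ with the top grade, so by the propagation guarantee every correct process delivers $y$ and thereafter holds an estimate $\ge y$. Two committed values are then comparable: if they are committed in different phases, the later one dominates the earlier by this invariant; if in the same phase, the Gradecast agreement guarantee together with the no-change condition forces them to coincide or nest. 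I would package this as a lemma asserting that the set of committed values forms a chain in $X$.

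For the round complexity there are two bounds to prove. The $h(X)$ bound is structural: every phase in which no correct process commits must strictly increase the estimate of some correct process, and a strictly increasing chain in $X$ has length at most $h(X)$, so after at most $h(X)+2$ phases no correct estimate can move and all correct processes commit, giving $3h(X)+6$ rounds. The $\sqrt{f}$ bound is the main obstacle and requires an amortized counting argument. The key dichotomy is that a freshly absorbed value in a phase is delivered either to many correct processes, in which case the propagation guarantee spreads it to everyone so it can never trigger another change, or to only few, in which case its introduction must have consumed Byzantine ``support'' drawn from a pool of size at most $f$. Charging phases with many changers against the first case and phases with few changers against the $f$-bounded Byzantine pool, and balancing the two at the threshold $\sqrt{f}$, bounds the number of phases by $2\sqrt{f}+2$ and hence the rounds by $6\sqrt{f}+6$.

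Finally, early stopping and the $\min$ fall out for free: each process halts as soon as its own no-change-then-confirm condition fires, so the number of phases executed never exceeds $\min\{h(X)+2,\,2\sqrt{f}+2\}$, the message count is $O(n^2)$ per phase times this many phases, i.e.\ $O(n^2\min\{h(X),\sqrt{f}\})$, and the round count is $\min\{3h(X)+6,\,6\sqrt{f}+6\}$. The step I expect to be genuinely hard is making the $\sqrt{f}$ counting airtight: I must pin down the exact grade threshold and decision rule so that the ``many versus few changers'' dichotomy is valid, so that the Byzantine-support accounting does not double-count across phases, and --- crucially --- so that this choice remains consistent with the stronger decision rule that Comparability demands.
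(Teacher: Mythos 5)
Your skeleton matches the paper's construction: phases of parallel Gradecasts, absorption of only sufficiently-graded values, a height argument for the $h(X)$ bound, and a charging argument for the $\sqrt{f}$ bound. But two protocol mechanisms that the paper's proof actually rests on are absent, and in both places the sketch as written fails. The first is Upward-Validity: it does \emph{not} follow from ``a value is absorbed only when Gradecast has vouched for it.'' A Byzantine process that behaves consistently \emph{within} a single phase gets a fresh value graded $2$ by every correct process, so it can inject one new value per phase; the extraneous set $B$ then grows by up to $t$ per phase instead of being bounded by $t$ overall. The paper closes this with a \emph{safe lattice}: after the first phase each process fixes a safe value set to the values graded at least $1$ (at most one per sender, hence at most $t$ Byzantine contributions, by the Gradecast agreement property), this set only shrinks in later phases, and any received value outside the join-closure of the current safe set is discarded inside Gradecast and hence graded $0$. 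Without an equivalent filter your protocol cannot satisfy Upward-Validity.

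The second gap is in the $\sqrt{f}$ accounting. You rightly flag the double-counting danger, but the resolution is not a sharper counting argument --- it is an extra protocol rule: every sender graded at most $1$ is placed in a permanent blacklist and ignored in all later Gradecasts. Only then can each Byzantine process be ``partially delivered'' (graded $2$ by some but not all correct processes) in at most one phase, so such events total at most $f$ and some phase among the first $\sqrt{f}$ has fewer than $\sqrt{f}$ of them. The paper also needs a potential-function step you do not have: in a phase with $f_r$ such processes, the height gap, measured inside the safe lattice, between the join of all safe values (non-increasing) and the join of all correct values (strictly increasing while someone is undecided) is at most $f_r$, which forces termination within $f_r+2$ further phases. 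Two smaller issues: ``early stopping'' in the paper means the bound adapts to the \emph{actual} number $t$ of faults, roughly $2\sqrt{t}+2$ phases, achieved by dynamically updating a termination round from observed blacklist growth --- not merely taking the min of the two bounds; and $n$ parallel Gradecast instances cost $O(n^3)$ messages per phase unless the $n$ instances are explicitly combined into vector-valued messages, which is how the paper reaches $O(n^2)$ per phase.
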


\begin{restatable}{theorem}{logTheorem} \label{log_theorem}
There is a $3\log n + 3$ rounds algorithm for the Byzantine lattice agreement problem in synchronous systems which can tolerate $f < \frac{n}{3}$ Byzantine failures, where $n$ is the number of processes in the system. The algorithm takes $O(n^2 \log n)$ messages. 
\end{restatable}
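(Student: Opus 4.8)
The plan is to lift the known $O(\log f)$ recursive-classification algorithm for crash-tolerant lattice agreement \cite{zheng2018lattice} into the Byzantine setting by replacing every all-to-all value exchange with an invocation of the modified Gradecast primitive of Feldman et al. \cite{feldman1988optimal}. Since each Gradecast costs exactly three rounds, and the classifier recursion over a label range of size $n$ has depth $\log n$, running one initial Gradecast in which every process disseminates its input together with $\log n$ classifier levels yields the claimed $3\log n + 3$ rounds. In every round each correct process sends at most one (piggybacked) message to each other process, so each round contributes $O(n^2)$ messages and the total is $O(n^2 \log n)$.

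First I would fix the Gradecast interface and isolate the three facts I rely on into a single lemma: (G1) if a correct process gradecasts $v$, then every correct process outputs $(v,2)$; (G2) for any two correct outputs $(v_i,g_i)$ and $(v_j,g_j)$ we have $|g_i - g_j| \le 1$; and (G3) if $g_i \ge 1$ and $g_j \ge 1$, then $v_i = v_j$. These are exactly the guarantees of the three-round protocol in \cite{feldman1988optimal} (and of its authenticated variant in \cite{katz2006expected}), so I would invoke them as a black box. Each correct process maintains a single lattice value, initialized to the join of all values it receives with grade at least $1$ in the initial Gradecast. By (G3) this join is over a set on which all correct processes agree, and by (G1) it contains every correct input, which gives Downward-Validity immediately and seeds the Upward-Validity argument.

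Next I would specify the classifier level and prove its comparability invariant. At each of the $\log n$ levels every process gradecasts its current value, discards every value received with grade $0$, applies the deterministic threshold rule of the halving classifier to the surviving values to decide whether it is a \emph{master} (which keeps, and possibly grows, its value) or a \emph{slave} (which adopts a dominating master value), and halves the admissible label range. Comparability is proved by induction on the recursion depth: the inductive step shows that after a level the value of any correct master and the value of any correct slave are comparable. The key is that, by (G3), all correct processes feed the \emph{same} set of positively graded values into the threshold test, so Byzantine equivocation cannot split them into incomparable classes through inconsistent views.

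I expect this comparability invariant to be the main obstacle. Unlike the crash model, where a value either reaches a correct process intact or not at all, a Byzantine sender can, by (G2), still force two correct processes to hold the same value with grades differing by one, so the acceptance threshold (grade $\ge 1$ versus grade $\ge 2$) and the classifier threshold must be chosen jointly so that no such unit discrepancy can place one correct process above the threshold while breaking its comparability with a slave. Once this is handled, Upward-Validity in the relaxed sense follows, because any accepted value was gradecast with positive grade and is therefore either a correct input, bounded by $\sqcup\{x_i \mid i \text{ correct}\}$, or a Byzantine value vouched for by some correct process; and termination within $3\log n + 3$ rounds is immediate from the round accounting above.
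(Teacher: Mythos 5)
Your round accounting and choice of building blocks (an initial Gradecast plus $\log n$ levels of three-round Gradecast) match the paper, but the core of your argument has two gaps. First, your key claim that ``by (G3), all correct processes feed the \emph{same} set of positively graded values into the threshold test'' is false: (G3) only says that two correct processes that both assign a positive grade to the same leader agree on the \emph{value}; by (G2) one correct process may assign grade $1$ while another assigns grade $0$, so the sets of positively graded values at two correct processes can differ by up to $f$ elements. This is exactly why the size-threshold classifier over a label range (the $O(\log f)$-style recursion you propose to lift) does not transfer for free: a Byzantine process can also lie about which side of the threshold it landed on, and the paper's threshold-based variant needs an extra exchange round per level (hence $4\log f + 3$, not $3\log f + 3$) plus per-group safe sets to control this. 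The paper's $3\log n + 3$ algorithm sidesteps the issue by splitting groups by process \emph{ids} (so group membership cannot be lied about) and by having only the slave half gradecast, with masters keeping grade-$\geq 1$ values and slaves keeping grade-$2$ values; the containment ``every slave value is held by every master'' at that level then follows from (G2) and (G3).

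Second, and more fundamentally, properties (G1)--(G3) say nothing about \emph{which} values a Byzantine leader may gradecast in later rounds: a Byzantine process placed in a slave subgroup can, at a deeper recursion level, consistently gradecast a brand-new value, receive grade $2$ from all correct slaves, and thereby push a slave's value above a master's, destroying comparability. Your proposal has no mechanism that persists across levels to block this. The paper's essential new ingredient is the safe value array $S_i[j]$ --- a per-process set of admissible values, updated at every level so that the union over all correct processes of the safe sets for a slave subgroup is always contained in every correct master's value set --- together with the rule that any value received from $j$ outside $S_i[j]$ is ignored inside Gradecast. This same machinery is what caps the Byzantine contribution to Upward-Validity at one value per faulty process overall, rather than one per faulty process \emph{per level} as your sketch would allow. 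Maintaining only ``a single lattice value'' per process, as you propose, cannot express this filtering, so both Comparability and Upward-Validity remain unproved in your outline.
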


\begin{restatable}{theorem}{logfTheorem} \label{logf_theorem}
There is a $4\log f + 3$ rounds algorithm for the Byzantine lattice agreement problem in synchronous systems which can tolerate $f < \frac{n}{3}$ Byzantine failures, where $n$ is the number of processes in the system. The algorithm takes $O(n^2 \log n)$ messages. 
\end{restatable}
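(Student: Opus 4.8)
The plan is to build on the two previous theorems and the Gradecast primitive. Theorem~\ref{log_theorem} already gives a $3\log n + 3$ round algorithm by what is almost certainly a recursive/tournament halving scheme over the $n$ processes: in each "phase" (costing $3$ rounds because one Gradecast instance takes $3$ rounds), the set of active candidate values is reduced by roughly half, so after $\log n$ phases every correct process has a comparable output. The target bound here is $4\log f + 3$, so the dependence must be moved from $\log n$ to $\log f$. The natural approach is to spend a constant number of rounds (the trailing "$+3$", i.e. one extra Gradecast phase) to first prune the number of *distinct* values that any correct process needs to track down to $O(f)$, and then run the halving argument on that reduced set, paying $4\log f$ rounds instead of $3\log n$.

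The key steps, in order. **First**, I would run one Gradecast round in which every process gradecasts its current knowledge; using the $n \ge 3f+1$ threshold, each correct process accepts a value only if it is supported (gradecast with high enough grade) by at least $f+1$ senders, guaranteeing at least one correct witness. The crucial combinatorial claim is that after this filtering, the number of *maximal* (or otherwise "surviving") values in the system is at most $O(f)$: intuitively, a correct process's own value is always safe, and among the values that differ there can be at most $f$ genuinely new ones contributed by distinct sources beyond a common core, because Byzantine processes number at most $f$ and correct processes' values are made comparable by the accept rule. **Second**, with the candidate set bounded by $O(f)$, I would invoke the same recursive doubling/halving structure as in the proof of Theorem~\ref{log_theorem}, but now the recursion depth is $\log f$ rather than $\log n$. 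Because each level of the recursion still costs Gradecast's $3$ rounds, and the $\log f$ bookkeeping apparently needs one extra communication round per level to re-certify the reduced set against Byzantine injection, the per-level cost becomes $4$, yielding $4\log f$. **Third**, I would verify the three output properties: Downward-Validity is immediate since a process only ever joins values into its estimate; Upward-Validity follows because every accepted value was gradecast by a correct process and hence lies below the join of correct inputs (this is where the modified validity definition from the asynchronous work would be cited if Byzantine-injected values must be tolerated); and Comparability follows from the halving invariant exactly as in Theorem~\ref{log_theorem}.

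**The main obstacle** will be the reduction from $\log n$ to $\log f$, specifically proving the $O(f)$ bound on surviving distinct values after the initial filtering phase and showing this bound is maintained as an invariant throughout the recursion. It is easy to filter out values with fewer than $f+1$ supporters, but one must argue that the *number of incomparable survivors* — not merely the number of Byzantine faults — is $O(f)$, and that Byzantine processes cannot repeatedly re-introduce freshly disguised values at each level to inflate the recursion depth back toward $\log n$. Gradecast's grade-$2$ agreement property (if one correct process outputs grade $2$, every correct process outputs the same value with grade $\ge 1$) is exactly the tool that prevents this: it lets a correct process reject any value that a single honest participant regards as forged, so the extra round per level buys a consistent, Byzantine-resistant certification of the surviving set. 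I would expect the message complexity claim $O(n^2\log f)$ to fall out routinely once the round structure is fixed, since each of the $O(\log f)$ levels runs $O(1)$ Gradecast instances and each Gradecast costs $O(n^2)$ messages.

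(Note: the statement as written asserts $O(n^2\log n)$ messages, which appears to be a typo for $O(n^2\log f)$ given the $4\log f+3$ round bound; I would flag this and prove the sharper $O(n^2\log f)$ bound, since the round count already forces only $O(\log f)$ Gradecast phases.)
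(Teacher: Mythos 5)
There is a genuine gap, and it sits exactly where you predicted the main obstacle would be. Your central combinatorial claim --- that after one filtering/Gradecast phase the number of surviving distinct values is $O(f)$ --- is false. After the initial Gradecast round, every correct process's input is gradecast with score $2$ and survives at every correct process, so there can be up to $n - f = \Theta(n)$ pairwise-incomparable surviving values; the accept rule does not make correct processes' inputs comparable to one another, it only prevents each Byzantine process from injecting more than one value. Consequently a recursion whose depth is the logarithm of the number of surviving values would still be $\log n$, not $\log f$, and the reduction you propose does not go through.

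What the paper actually exploits is different: after the initial Gradecast, every correct process's value set has \emph{cardinality} between $n - f$ and $n$, so the uncertainty is a window of width $f$ on set sizes, not a bound on the number of values. The $\log f$ recursion is then a binary search on this cardinality window. Processes carry a label (threshold) $k$, initially $k_0 = n - f/2$; at each level every process in a group gradecasts its value set, the group members exchange their safe value sets for the prospective slave group, and each process classifies itself as master or slave according to whether the union of accepted safe sets exceeds $k$. Masters move to label $k + f/2^{r+1}$, slaves to $k - f/2^{r+1}$, and the invariant maintained (Lemma~\ref{lem:cls_logf} and Lemma~\ref{lem:dec}) is that correct processes in a group with label $k$ at round $r$ have value sets of size in $(k - f/2^r,\, k + f/2^r]$ with the union of their safe sets also bounded by $k + f/2^r$. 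After $\log f$ levels the window has width $2$, forcing processes in the same group to hold identical sets and processes in different groups to hold subset-ordered sets. Your instincts about the extra fourth round per level (it is spent exchanging safe value sets so that classification is Byzantine-resistant), about the trailing $+3$ being the initial Gradecast, and about the $O(n^2 \log n)$ message bound in the statement likely being a typo for $O(n^2 \log f)$, are all consistent with the paper; but without the cardinality-threshold classifier the key step of your argument fails.
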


%Algorithms in the crash failure model given in \cite{faleiro2012generalized, zheng2018lattice} follows a similar framework. At each round, each correct process broadcasts its current value to all processes. If all received values of a correct process at this round are comparable with its current value, it decides on this value. Otherwise, it updates its value to be the join of all received values and starts next round. 
%In the Byzantine failure model, however, extra work needs to be done to prevent Byzantine processes from sending arbitrary values. In our algorithm, each process keeps track of a safe lattice and use this lattice to filter out invalid values from Byzantine processes. To get our desired $O(\sqrt{f})$ round guarantee, we apply a slightly modified version of the Gradecast algorithm \cite{feldman1988optimal} to detect Byzantine processes at each round and ignore their messages in future rounds. 

\subsection{Related Work}
The lattice agreement problem in crash failure model has been studied both in synchronous and asynchronous systems. 

In synchronous systems, a $\log n$ rounds recursive algorithm based on ``branch-and-bound'' approach is proposed by Attiya et al~\cite{attiya1995atomic} with message complexity of $O(n^2)$. The basic idea of their algorithm is to divide processes into two groups based on ids and let processes in the first group send values to processes in the second group. Each process in the second group takes join of received values. Then, this procedure continues within each subgroup. Their algorithm can tolerate at most $n - 1$ process failures. Later, the paper by Mavronicolasa et al~\cite{mavronicolasabound} gave an algorithm with round complexity of $\min \{1 + h(L), \lfloor (3 + \sqrt{8f + 1}/2) \rfloor \}$, for any execution where at most $f < n$ processes may crash and $h(L)$ denotes the height of the input lattice $L$. Their algorithm has the early-stopping property and is the first algorithm with round complexity that depends on the actual height of the input lattice. The best upper bound for the lattice agreement problem in crash-failure model is given by Xiong et al~\cite{zheng2018lattice}, which is $O(\log f)$ rounds. The basic idea of their algorithm is again to divide processes into two groups and trying to achieve agreement within each group recursively. But their criterion for dividing groups is based on the height of a value in the input lattice instead of process ids as in \cite{attiya1995atomic}. 

In asynchronous systems, the lattice agreement problem was first studied by Faleiro et al in \cite{faleiro2012generalized}. They present a Paxos style protocol when a majority of processes are correct. Their algorithm needs $O(n)$ asynchronous rounds in the worst case. The basic idea of their algorithm is to let each process repeatedly broadcast its current value to all at each round and update its value to be the join of all received values until all received values at a certain round are comparable with its current value. Later, Xiong et al~\cite{zheng2018lattice} proposes an algorithm which improves the round complexity to $O(f)$ rounds. For the best upper bound, a different paper by Xiong et al~\cite{zheng2018linearizable} presents an algorithm for this problem with round complexity of $O(\log f)$, which applies similar idea as~\cite{zheng2018lattice} but with extra work to take care of possible arbitrary delay of messages in asynchronous systems.  

The Byzantine failure model was first considered by Lamport et al~\cite{lamport1982byzantine} during the study of the Byzantine general problem. For the lattice agreement problem in Byzantine failure model, Nowak et al~\cite{nowak2019byzantine} gives an algorithm for a variant of the lattice agreement problem on cycle-free lattices that tolerates up to $f < \frac{n}{(h(X) + 1)}$ Byzantine faults, where $h(X)$ is the height of the input lattice $X$. In their problem, the original Downward-Vality and Upward-Validity requirement are replaced with a different validity definition, which only requires that for each output value $y$ of a correct process, there must be some input value $x$ of a correct process such that $x \leq y$. With their validity definition, however,  corresponding algorithms are not suitable for applications in atomic snapshot objects and linearizable replicated state machines, since each process would like to have their proposal value included its output value. A more closely related work is the preprint by Di Luna et al~\cite{di2019byzantine}, which proposes a reasonable validity condition and presents the first algorithm for asynchronous systems. Their algorithm takes $O(f)$ rounds. The basic idea of their algorithm is to first use the asynchronous Byzantine reliable broadcast primitive \cite{bracha1987asynchronous, srikanth1987simulating} to let all correct processes disclose their input values to each other, based on which each correct process constructs a set of safe values. This set of safe values are the only values a correct process will possibly deliver in future rounds. After the disclosure phase, the remaining steps are similar to the algorithms given in \cite{faleiro2012generalized, zheng2018lattice} for the lattice agreement problem in crash failure model, except that each process delivers a message only if all the values included in it are contained in its safe value set. Both the above two papers consider asynchronous systems. For synchronous systems with Byzantine failures, however, no algorithms have been proposed, which is the primary goal of this paper. 

For related works on application of lattice agreement, Faleiro et al~\cite{faleiro2012generalized} gives procedures to build a linearizable and serializable replicated state machine which only supports query operation and update operation but not mixed query-update operation, based on lattice agreement protocols. Later, Xiong et al~\cite{zheng2018linearizable} proposes some optimizations for their procedure for implementing replicated state machines in practice, specifically, they proposed a method to truncate the logs maintained. They also implemented a simple replicated state machine using their optimized protocols and compare it with S-Paxos \cite{biely2012s}, which is a java implementation of Paxos \cite{lamport1998part, lamport2001paxos}, and demonstrate better throughput and latency in some settings. The recent paper by Skrzypczak et al~\cite{skrzypczak2019linearizable} improves the procedure given in \cite{faleiro2012generalized} in terms of memory consumption, at the expense of progress, and high throughput is demonstrated. 

In a concurrent work by Di Luna et al~\cite{di2020synchronous}, they show that the Byzantine lattice agreement problem cannot be solved with $f \geq \frac{n}{3}$ failures in a synchronous systems. This shows that our algorithms achieve optimal resilience. In their paper, they give an algorithm for this problem which takes $O(\log f)$ rounds and tolerates $f < \frac{n}{4}$ failures. With the assumption of digital signatures, they can improve the resilience to be $f < \frac{n}{3}$. 

\section{System Model and Problem Definition}
 \subsection{System Model}
We assume a distributed message passing system with $n$ processes in a completely connected topology, denoted as $p_1,...,p_n$. Every process can send messages to every other process. We consider synchronous systems, which means that message delays and the duration of the operations performed by the process have an upper bound on the time. We assume that processes can have Byzantine failures but at most $f < n/3$ processes can be Byzantine in any execution of the algorithm. We use parameter $t$ to denote the actual number of Byzantine processes in a system. By our assumption, we must have $t \leq f$. Byzantine processes can deviate arbitrarily from the algorithm. We say a process is correct or non-faulty if it is not a Byzantine process.  We assume that the underlying communication system is reliable but the message channel may not be FIFO. In our algorithms, when a process sends a message to all, it also sends the message to itself. 

\subsection{The Byzantine Lattice Agreement Problem}
Let ($X$, $\leq$, $\sqcup$) be a finite join semi-lattice with the partial order $\leq$ and the join operation $\sqcup$. Two values $u$ and $v$ in $X$ are comparable iff $u \leq v$ or $v \leq u$. The join of $u$ and $v$ is denoted as $\sqcup \{u, v\}$. $X$ is a \textit{join semi-lattice} if a join exists for every nonempty finite subset of $X$. As customary in this area, we use the term {\em lattice} instead of
{\em join semi-lattice} in this paper for simplicity.  

In the Byzantine lattice agreement (BLA) problem, each process $p_i$ can propose a value $x_i$ in $X$ and must decide on some output $y_i$ also in $X$ with the presence of at most $f$ Byzantine processes in the system. Let $C$ denote the set of correct processes. Let $t$ denote the actual number of Byzantine processes in the system. An algorithm is said to solve the Byzantine lattice agreement problem if the following properties are satisfied: 

\textbf{Comparability}: For all $i \in C$ and $j 
\in C$, either $y_i \leq y_j$ or $y_j \leq y_i$.

\textbf{Downward-Validity}: For all $i \in C$, $x_i \leq y_i$. 

\textbf{Upward-Validity}: $\sqcup \{y_i ~ | ~ i \in C\} \leq \sqcup(\{x_i ~ | ~ i \in C\} \cup B)$, where $B \subset X $ and $|B| \leq t$.  \\

\begin{remark} 
The Upward-Validity given by Attiya et al~\cite{attiya1995atomic} is not suitable anymore with the presence of Byzantine processes, since the input value for a Byzantine process is not defined. Thus, the extra $B$ set is used to accommodate for possible values from Byzantine processes. The above Upward-Validity is similar to the Non-Triviality defined in \cite{di2019byzantine}. The only difference is that the extra $B$ set in \cite{di2019byzantine} is required to have size at most $f$, which is the resilience parameter. 
\end{remark}
In this paper, for a given set $V \subseteq X$, we use $\mathcal{L}(V)$ to denote the join-closed subset of $X$ that includes all elements in $V$. Clearly, $\mathcal{L}(V)$ is also a join semi-lattice. The height of a value $v$ in a lattice $X$ is defined as the length of longest chain from any minimal value to $v$, denoted as $h_X(v)$ or $h(v)$ when it is clear. The height of a lattice $X$ is the height of its largest value in this lattice, denoted as $h(X)$. For two lattices $\mathcal{L}_1$ and $\mathcal{L}_2$, we use $\mathcal{L}_1 \subseteq \mathcal{L}_2$ to mean that $\mathcal{L}_1$ is a sublattice of $\mathcal{L}_2$.

\section{$O(\sqrt{f})$ Rounds Algorithm for the BLA Problem}
%VJ: node vs process
In this section, we present an $O(\sqrt{f})$ algorithm for the BLA problem, which applies a slightly modified version of the Gradecast algorithm given by Feldman et al~\cite{feldman1988optimal} as a building block. 
Our algorithm has two primary ingredients which are quite different from the algorithm given in \cite{faleiro2012generalized, zheng2018lattice} for the crash failure model. In the Byzantine failure model, correct processes can receive arbitrary values from a Byzantine process. In order to guarantee {\bf Upward-Validity}, we do not want correct processes to accept arbitrarily many values sent from a Byzantine process. The idea in \cite{nowak2019byzantine} is to construct a safe value set, which stores the values reliably broadcast by each process at the first round. Later on, each process only delivers a received message if the values included in this message are contained in its safe value set. In this way, correct processes would not deliver arbitrary values sent by Byzantine processes. However, this idea can only provides $O(f)$ rounds guarantee.

To obtain the $O(\sqrt{f})$ rounds guarantee, our first idea is to let each correct process in our algorithm keep track of a lattice, which we call the safe lattice, instead of just a set of values. At each round, each correct process ignore all values received which are not contained in this safe lattice. By carefully updating this safe lattice of each correct process, our algorithm ensures that the value sent from a correct process is always in the safe lattice of any other correct process and Byzantine process cannot introduce arbitrary values to break the {\bf Upward-Validity} condition. To get the $O(\sqrt{f})$ bound, another crucial ingredient of our algorithm is to apply the Gradecast algorithm at each round to detect the Byzantine processes which sends different values to different correct processes and let each correct process ignores messages from these processes. This idea is used in \cite{ben2010simple} to solve the Byzantine consensus problem in synchronous systems.

\subsection{The Modified Gradecast Algorithm} 
Let us now first look at the modified Gradecast algorithm. Gradecast \cite{feldman1988optimal} is a three-round distributed algorithm that ensures some properties that are similar to those of broadcast. Specifically, in Gradecast there is a leader process $p$ that sends a value $v$ to all other processes. The output of process $i$ is a triple $<p, v^i_p, c^i_p>$ where $v^i_p$ is the value process $i$ thinks the leader $p$ has sent and $c^i_p$ is the score assigned by $i$ for the leader. A higher score indicates a higher confidence regarding the correctness of the leader. We say $c^i_p$ is the score assigned to the value or the leader by process $i$. The original Gradecast algorithm in \cite{feldman1988optimal} has the following properties.

1. If the leader $p$ is non-faulty then $v^i_p$ = $v$ and $c^i_p = 2$, for any non-faulty $i$;

2. For every non-faulty  $i$ and $j$: if $c^i_p > 0$ and $c^j_p > 0$ then $v^i_p = v^j_p$;

3. $|c^i_p - c^j_p| \leq 1$ for every non-faulty $i$ and $j$.

For our purpose, we do a slight modification of the Gradecast algorithm from \cite{feldman1988optimal} to enable processes to filter out some invalid values received. For completeness, we present the modified Gradecast algorithm in Fig \ref{fig:gradecast}. The only modification we do is to let each process store a lattice and filter out all received values which are not in the lattice. We call this lattice --- {\em the safe lattice}. 

\begin{figure} [htbp] \centering
\fbox{\begin{minipage}[t]  {4.5in}
\underline{{\bf Gradecast($q$, $v_q$)}:} \\
Each process $p$ keeps track of a safe value set $SV_p$ and a safe lattice $\mathcal{L}(SV_p)$: the join-closed subsets of $X$ including $SV_p$.\\
Each process $p$ ignores values received which do not belong to $\mathcal{L}(SV_p)$. \\
Each process $p$ uses $B_p$ to store Byzantine processes known to $p$. \\
Each process $p$ ignores messages received from processes in $B_p$. \\

{\bf Round 1:} \\ 
Leader $q$ sends $v_q$ to all \\

Each process $p$ executes the following steps \\
{\bf Round 2:} \\
%{\bf if} the value received from $q$ belongs to $\mathcal{L}(SV)$ \\
$p$ sends the value received from $q$ to all \\
%{\bf endif}\\
Let $<j, v_j>$ denote $p$ received $v_j$ from $j$ \\
Let $maj$ be a value received the most among such values \\
Let $\#maj$ be the number of occurrences of $maj$\\ 

{\bf Round 3:} \\
{\bf if} $\#maj \geq n - f$ \\
\h $p$ sends $maj$ to all \\
{\bf endif} \\

Let $<j, v_j>$ denote that $p$ received $v_j$ from $j$ \\
Let $maj^{'}$ be a value received the most among such values\\ 
Let $\#maj^{'}$ be the number of occurrences of $maj'$ \\

/* Grading */\\
{\bf if} $\#maj^{'} \geq n - f$ \\
\h set $v_p:= maj'$ and $c_p := 2$ \\
{\bf elif} $\#maj' \geq f + 1$ \\
\h set $v_p:= maj'$ and $c_p:= 1$ \\
{\bf else} \\
\h set $v_p := \bot $ and $ c_p := 0$ \\
{\bf endif}\\
Return $<q, v_p, c_p> $
\end{minipage} 
}
\caption{Gradecast Algorithm \label{fig:gradecast}}
\end{figure}

Each process $i$ keeps a safe value set, denoted as $SV_i$. This set is updated by process $i$ at each round of the main algorithm. From $SV_i$, each process $i$ constructs a safe lattice as the join-closed subset of $X$ which includes all values in $SV_i$, i.e, $\mathcal{L}(SV_i)$. The Gradecast procedure has two parameters. The first one specifies the leader of the Gradecast and the second one represents the value that the leader would like to send. When the leader invokes the Gradecast algorithm, all correct processes participate in a correct manner. While executing the Gradecast algorithm, each process $i$ ignores messages sent by processes in its bad set and ignores all received values which are not contained in its safe lattice $\mathcal{L}(SV_i)$. At round 1 of Gradecast, the leader broadcasts its value to all. At round 2, each process sends the value received from the leader if it is valid to all. At round 3, each process computes the most frequent value received at round 2 and sends this value to all if its frequency is at least $n - f$. Then, each process performs the grading step. It again computes the most frequent value received at round 3. If this value has frequency at least $n - f$, then it grades this leader with score 2 and sets the corresponding value to be the most frequent one. Otherwise, if the frequency is at least $f + 1$, it grades the leader with score 1 and sets the corresponding value to be the most frequent value received at round 3. Otherwise, it grades the leader with score 0 and sets the value to null (denoted by $\bot$). 

First assume that a correct leader always gradecast some value which lies in the safe lattice of each correct process and the bad set of each correct process does not contain any correct process. We will show that these assumptions are satisfied when we invoke the modified Gradecast algorithm as a substep in our main algorithm. 

\begin{lemma} \label{lem:gradecast_property}
Assume that a correct leader always gradecast some value which lies in the safe lattice of each other correct process and the bad set of each correct process does not contain a correct process. Then the modified Gradecast algorithm satisfies all properties of Gradecast. 
\begin{proof}
Proof of property 2 and 3 is the same as the proof of the original Gradecast algorithm. For property 1, since we assume that the value of a correct process is in the safe lattice of each other correct process and a correct process is not in any bad set, then no correct process would ignore its value. Therefore, in the grading phase, each correct process will grade a correct leader with score 2 and deliver the same value from the correct leader. 
\end{proof}
\end{lemma}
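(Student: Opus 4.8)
The plan is to treat the three Gradecast properties separately, reusing the classical quorum-intersection analysis and isolating the two places where the added features — value filtering by the safe lattice and sender filtering by the bad set — could interfere. First I would dispatch property 1, which is where the two hypotheses do all the work. Assume the leader $q$ is correct and gradecasts a value $v$ with $v \in \mathcal{L}(SV_i)$ and $q \notin B_i$ for every correct $i$. Then $v$ survives every filtering step: at round 1 each correct process accepts $v$ from $q$; at round 2 the at least $n-f$ correct processes all forward $v$, and since $n-f > n/2$ this makes $v$ the strict majority at every correct process, so each sets $maj = v$ with $\#maj \ge n-f$ and forwards $v$ in round 3; at round 3 the same counting gives $\#maj' \ge n-f$ everywhere. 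The crucial point is that none of these copies of $v$ is discarded, because $v$ lies in every correct safe lattice and no correct forwarder sits in any bad set. Hence every correct process grades $q$ with score $2$ and outputs $v$, which is exactly property 1.

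For properties 2 and 3 I would follow the standard quorum-intersection argument, noting that correct processes forward in rounds 2 and 3 exactly as in the unmodified protocol, so their forwarding behavior is untouched by filtering. Two facts suffice. First, any two correct processes that forward in round 3 forward the same value $v^\ast$: their round-2 quorums of size $\ge n-f$ intersect in $\ge n-2f \ge f+1$ processes, hence in a correct process, which sends a single value to both. Second, a positive grade at a correct process requires $\ge f+1$ accepted round-3 copies of its graded value, at least one from a correct forwarder; since that forwarder sends $v^\ast$, the graded value is $v^\ast$. Applying the second fact to two positively-graded correct processes $i$ and $j$ yields $v^i = v^\ast = v^j$, which is property 2. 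For property 3 it is enough to exclude $c^i = 2$ together with $c^j = 0$: grade $2$ at $i$ means $i$ accepted $\ge n-f$ round-3 copies of $v^\ast$, of which $\ge n-2f \ge f+1$ come from correct forwarders, and these same correct processes send $v^\ast$ to every correct $j$, so $j$ should see $\ge f+1$ copies and grade at least $1$.

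The main obstacle is exactly this last step, and it is the one point where filtering is not innocuous. For property 3 I must guarantee that a correct $j$ does not discard the $\ge f+1$ copies of $v^\ast$ forwarded to it by correct processes in round 3. Sender filtering is harmless, since the bad-set hypothesis keeps every correct forwarder out of $B_j$; the delicate case is value filtering, which would drop $v^\ast$ precisely when $v^\ast \notin \mathcal{L}(SV_j)$. One can check that if the correct processes' safe lattices are allowed to be inconsistent then property 3 can genuinely fail, so the two stated hypotheses are doing more than they appear to: the resolution is the invariant — maintained by the way the main algorithm updates the safe value sets, and the reason these hypotheses hold when Gradecast is invoked as a substep — that any value a correct process forwards already lies in every correct process's safe lattice. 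Under this invariant $v^\ast \in \mathcal{L}(SV_j)$, no correct copy of $v^\ast$ is filtered, and properties 2 and 3 collapse onto the original Gradecast analysis as claimed. Verifying that this invariant is preserved across rounds is the part I expect to require the most care.
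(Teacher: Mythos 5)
Your handling of properties 1 and 2 matches the paper's: for property 1 you give (in more detail) exactly the argument the paper gives --- the two hypotheses guarantee the correct leader's value survives every filter, so every correct process counts at least $n-f$ accepted copies in each relay round and grades 2; for property 2 your quorum-intersection argument is the standard one the paper appeals to with ``same as the original,'' and you correctly observe that filtering only removes counts and therefore cannot hurt an existence argument. The real content of your write-up is that you noticed what the paper's one-line dismissal hides: properties 2 and 3 are claimed for \emph{every} leader, including Byzantine ones, while the stated hypotheses only constrain values gradecast by \emph{correct} leaders, so value filtering genuinely threatens property 3 --- a correct $j$ could discard the $\geq f+1$ round-3 copies of $v^\ast$ that, in the unmodified protocol, force $c^j_p \geq 1$ whenever $c^i_p = 2$.

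The gap is in how you close this. You resolve property 3 by invoking an invariant --- every value a correct process forwards already lies in \emph{every} correct process's safe lattice --- which you assert is ``maintained by the way the main algorithm updates the safe value sets'' but do not verify, and which does not obviously hold. In the main algorithm $SV_i$ is set to $U^1_i$, the values graded \emph{at least} $1$ by $i$; a value graded exactly $1$ by $i$ and $0$ by $j$ lands in $SV_i$ but need not land in $SV_j$, so the safe lattices of distinct correct processes can genuinely disagree. A Byzantine leader can then gradecast such a value $v$ in the next round: the correct processes with $v \in \mathcal{L}(SV)$ accept and relay it, $j$ filters it, and nothing in your argument (nor in the paper's) excludes $c^i = 2$ at some correct $i$ while $c^j = 0$ at $j$. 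So the final step of your property-3 proof is asserted rather than proved; to repair it you would need either to prove a weaker but sufficient statement (e.g., that any value reaching score $2$ at some correct process must already lie in every correct safe lattice, which is exactly what is in question) or to add a hypothesis about the consistency of the safe lattices across correct processes. To be fair, the paper's own proof does not address this point at all --- you have surfaced a subtlety it silently assumes away --- but as a standalone argument your proof of property 3 is incomplete at precisely the step you yourself flag as requiring the most care.
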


%VJ: B and SV should be B_i and SV_i
The main algorithm, shown in Fig \ref{fig:main}, runs in synchronous rounds. For ease of presentation, we use rounds to mean the iterations in the {\bf for} loop  of the main algorithm. We call the rounds taken by the Gradecast step as sub-rounds. In the main algorithm, each process $i$ keeps track of a bad set $B_i$, which contains all the Byzantine processes process $i$ has known. Each process which is graded with score at most 1 by process $i$ in some Gradecast step is included into $B_i$. Each process also keeps track of a set of safe values $SV_i$ from which the safe lattice is constructed. Initially, $SV_i$ is empty and each process accepts any received values at the first round. We assume for now that there is an upper bound $F$ on the number of rounds of the main algorithm. We will establish the accurate value of $F$ when we analyze the round complexity. 

\begin{figure} [htb] \centering \label{fig:main}
\fbox{\begin{minipage}[t]  {5.2in}
\underline{{\bf Algorithm for Process $i$}:} \\
$v_i := x_i$ //value held by $i$ during the algorithm\\
$B_i := \emptyset$ //set of faulty processes known \\
$SV_i := \emptyset$ //set of safe values\\
$F$ // an upper bound on the number of rounds \\
$y_i := \bot$ // output value \\

{\bf 1: } {\bf for } $r := 1$ to $F$  \\
{\bf 2: } \h {\bf Gradecast}($i$, $v_i$) \\
{\bf 3: } \h Let $<j, u_j, c_j>$ denote that process $j$ gradecast $u_j$ with score $c_j$ \\
{\bf 4: } \h Define $U_i^1 := \cup \{u_j ~| <j, u_j, c_j> \wedge ~c_j \geq 1, j \in [n]\}$ \\
{\bf 5: } \h Define $U_i^2 := \cup \{u_j~ | <j, u_j, c_j> \wedge ~c_j = 2, j \in [n]\}$ \\
{\bf 6: } \h Set $B_i := B_i \cup \{j ~| <j,*,c_j> \wedge ~c_j \leq 1, j \in [n]\}$ \\
{\bf 7: } \h Set $SV_i := U_i^1$ \\
{\bf 8: } \h {\bf if} $v_i$ is comparable with each value in $U_i^2$\\
{\bf 9: } \h\h $y_i := v_i$ /* decided */\\
{\bf 10: } \h {\bf end} \\
{\bf 11:} \h Set $v_i :=  \sqcup \{u ~|~ u \in U_i^2\}$ \\
{\bf 12:} \h $r := r + 1$ \\
{\bf 13:} {\bf endfor} 
\end{minipage} 
}
\caption{$O(\sqrt{f})$ Rounds Algorithm for the BLA Problem \label{fig:main}}
\end{figure}

At each round, each process invokes the Gradecast algorithm with its current value and acts as the leader. So there are at least $n - f$ Gradecast instances running at each round, with each instance corresponding to one correct process. After the Gradecast phase, each process $i$ has a set of triples, one for each process which invoked Gradecast as the leader. A triple consists of the leader id, the value sent by the leader, and the score assigned by $i$. From these triples, processes $i$ updates its bad set $B_i$ and safe value set $SV_i$ as follows. At line 6, process $i$ includes all processes which are assigned score at most one into its bad set $B_i$ and ignores all messages sent from processes in $B_i$ at future rounds. Process $i$ also updates its safe value set $SV_i$ to be the union of all values gradecast by processes with score at least one. By updating the safe value set in such way, we can ensure that the current value of a correct process must be in the safe lattice of any other correct process. Thus, the value gradecast by a correct process in the next round is valid for any other correct process, which implies property 1 of Gradecast. On the other hand, this safe value set also prevents Byzantine processes from gradecasting an arbitrary value, i.e, Byzantine processes can only gradecast values that belong to the safe lattice. We explain this in details in our proof for correctness.

For the deciding condition, each process decides on its current value at a certain round if all values gradecast by processes with score 2 are comparable with its current value. A process keeps executing the algorithm even if it has decided. It updates its current value to be the join of all values gradecast with score 2 and starts the next round. 

\begin{table*}[h]
\caption{Notations}
\centering
\begin{tabular}{ |l|l|} 
 \hline
 \textbf{Variable} & \textbf{Definition} \\ \hline
 $C$ & The set of correct processes \\ \hline 
 $B_i^r$ & The set of Byzantine processes known by process $i$ at the end of round $r$ \\ \hline 
 $v_i^r$& Value of process $i$ at the end of round $r$ \\ \hline
 $v^r$& \makecell[l]{Auxiliary variable. The join of all values in $V^r$, i.e., $v^r := \sqcup \{u ~|~ u \in V^r\}$} \\ \hline
 $SV_i^r$ & The safe value set held by process $i$ at the end of round $r$ \\ \hline 
 $S^r$ & \makecell[l]{Auxiliary variable. The union of all safe value sets held by correct processes \\ at the end of round $r$, i.e, $S^r = \cup \{SV^r_{i} ~|~ i \in C\}$} \\ \hline 
 $s^r$ & Auxiliary variable. The join of all values in $S^r$, i.e, $s^r = \sqcup\{v ~|~ v \in S^r \}$\\ \hline $c^i_j$ & The score process $i$ assigned to process $j$ in the Gradecast with $j$ as the leader. \\\hline 
\end{tabular}
\label{tab:variable_sqrtf}
\end{table*}

\subsection{Proof of Correctness}
We now prove the correctness of our algorithm. The variables used for the proof are defined in Table \ref{tab:variable_sqrtf}. The main algorithm has the following properties. Property $(p1)$ and $(p8)$ immediately justify the assumption in Lemma \ref{lem:gradecast_property}. Thus, all properties of Gradecast are satisfied.

\begin{lemma} \label{lem:properties}
Let $i$ and $j$ be any two correct processes. For any round $1 \leq r \leq F$, the main algorithm satisfies the following properties. \\
(p1) $v^r_i \in \mathcal{L}(SV^r_j)$ \\
(p2) $v^r \in \mathcal{L}(SV^r_i)$ \\
(p3) $v^r \leq s^r$ \\
(p4) $v^{r - 1}_i < v^{r}_i$ if process $i$ is undecided at the end of round $r$ \\
(p5) $v^{r} < v^{r + 1}$ if at least one correct process is not decided at the end of round $r$ \\
(p6) $\mathcal{L}(S^{r + 1}) \subseteq \mathcal{L}(S^r)$ \\
(p7) $s^{r + 1} \leq s^r$ \\
(p8) For each correct process $i$, its bad set $B_i$ never contains a correct process 
\begin{proof}
$(p1)$: By the algorithm, $v^r_i$ is the join of all values gradecast with score 2 at round $r$. Consider an arbitrary value $v$ gradecast by some process $t$ with assigned score 2 by $i$ at round $r$, i.e, $c^i_t = 2$. Then, by property 3 of Gradecast, we have $c^j_t \geq 1$. Hence, from line 6 of the main algorithm, we have $v \in SV^r_j$. Since $v$ is some arbitrary value, we have $v^r_i \in \mathcal{L}(SV^r_j)$. \\
$(p2)$: Implied by $v^r = \{v_j^r ~|~ j \in C\}$ and $v_j^r \in \mathcal{L}(SV_j^r)$ by $(p1)$.  \\
$(p3)$: Since $\mathcal{L}(SV_j^r) \subseteq \mathcal{L}(S^r)$ and $s^r$ is the largest element in $\mathcal{L}(S^r)$, we have $v^r \in \mathcal{L}(S^r)$ and $v^r \leq s^r$. \\
$(p4)$: By the deciding condition, if $i$ is not decided at the end of round $r$, then it must receive at least one value which is incomparable with its current value. After taking joins, its new value must be greater than its old value. \\
$(p5)$: Immediately implied by $(p4)$. \\
$(p6)$: Suppose $\mathcal{L}(S^{r + 1}) \not \subseteq \mathcal{L}(S^r)$, then there must be some value $v$ gradecast by some process $b$ at round $r + 1$ such that $v \in \mathcal{L}(S^{r + 1}) \wedge v \not \in \mathcal{L}(S^r)$. Then, we have $v \not \in \mathcal{L}(SV^r_i)$ for any correct $i$. However, in the  Gradecast step at round $r$, for any correct process $i$, all received values which are not in $\mathcal{L}(SV^{r}_i)$ are ignored. Thus, each correct process grades value $v$ sent by $b$ with score 0. Thus, no correct process will include $v$ into their safe value set. Therefore, $v \not \in SV^{r + 1}_i$ for any correct $i$, a contradiction to $v \in \mathcal{L}(S^{r + 1})$.  \\
$(p7)$: Immediately implied by $(p6)$. \\
$(p8)$: Proof by induction on the round number $r$. \\
{\em Base case} ($r = 1$). At the first round, the safe lattice of each correct process is empty and each process accepts any message from any process. Then, by the Gradecast algorithm, the value of a correct leader will be graded with score 2. Thus, the bad set of each correct process does no contain a correct process. \\
{\em Induction case}: Assume that the property holds for all rounds before round $r$. At round $r$, since each correct leader is not in the bad set of any correct process, its value will be graded with score 2. Thus, it will not be included in a bad set of any correct process. 
\end{proof}
\end{lemma}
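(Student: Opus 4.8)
The plan is to establish properties $(p1)$ through $(p7)$ directly and then prove $(p8)$ by a short induction on the round number. The tempting but misleading reading is that everything must be one simultaneous induction, because the guarantees of the Gradecast subroutine are available only through Lemma \ref{lem:gradecast_property}, whose hypotheses are $(p1)$ and $(p8)$ themselves (read one round back), so Gradecast appears to depend on the very properties we are proving. Untangling this apparent circularity is the main obstacle, and the observation that dissolves it is that $(p1)$ never invokes the full strength of Lemma \ref{lem:gradecast_property}: it rests only on Gradecast properties 2 and 3, which --- as the proof of Lemma \ref{lem:gradecast_property} notes --- hold for the modified algorithm unconditionally, independent of any safe-lattice or bad-set assumption.

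Concretely, for $(p1)$ I would take any value $v$ graded $2$ by a correct process $i$ at round $r$, i.e. $c^i_t = 2$ for its leader $t$; property 3 ($|c^i_t - c^j_t| \le 1$) forces $c^j_t \ge 1$ at every correct $j$, property 2 forces the value $j$ associates to $t$ to equal $v$, and lines 6--7 then place $v$ into $SV^r_j$. Since $v^r_i$ is the join of such values and $\mathcal{L}(SV^r_j)$ is join-closed, $(p1)$ follows with no induction. The next block is then immediate: $(p2)$ is the join over correct processes of the memberships supplied by $(p1)$; $(p3)$ uses $\mathcal{L}(SV^r_j) \subseteq \mathcal{L}(S^r)$ and that $s^r$ is the top of $\mathcal{L}(S^r)$; $(p4)$ and $(p5)$ come straight from the deciding condition, since an undecided process has joined its value with something incomparable to it and so strictly increases; and $(p6)$ (hence $(p7)$) follows contrapositively from the filtering rule, as any generator of $\mathcal{L}(S^{r+1})$ lying outside $\mathcal{L}(S^r)$ would lie outside every $\mathcal{L}(SV^r_i)$, be discarded by all correct processes during the round-$(r+1)$ Gradecast, graded $0$, and thus never enter any $SV^{r+1}_i$, contradicting its membership in $S^{r+1}$.

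With $(p1)$ in hand unconditionally, $(p8)$ becomes a clean induction. In the base case $r = 1$ all bad sets are empty, so trivially none contains a correct process. For the inductive step I would combine $(p8)$ at round $r-1$ (the hypothesis) with $(p1)$ at round $r-1$ (already proved) to verify both hypotheses of Lemma \ref{lem:gradecast_property} at round $r$ --- a correct leader's gradecast value $v^{r-1}_t$ lies in every correct safe lattice, and no correct process sits in a bad set at the start of round $r$ --- and then invoke Gradecast property 1 to conclude that every correct leader is graded $2$ and is therefore never added to a bad set at round $r$. The only remaining care, beyond this decoupling, is indexing discipline: the filter applied during the round-$(r+1)$ Gradecast is the lattice $\mathcal{L}(SV^r_i)$ frozen at the end of round $r$, and the hypotheses fed into Lemma \ref{lem:gradecast_property} at round $r$ are the prior-round instances of $(p1)$ and $(p8)$, not the current ones.
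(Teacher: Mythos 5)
Your proof follows essentially the same route as the paper's: (p1) via Gradecast properties 2 and 3, (p2)--(p7) as direct consequences, and (p8) by induction on the round number using Gradecast property 1 through Lemma~\ref{lem:gradecast_property}. The one refinement you add --- explicitly observing that (p1) needs only properties 2 and 3, which the proof of Lemma~\ref{lem:gradecast_property} establishes without invoking its hypotheses, so that the apparent circularity with (p8) dissolves --- is handled in the paper only by the informal remark that (p1) and (p8) ``immediately justify the assumption in Lemma~\ref{lem:gradecast_property}''; your version is a cleaner bookkeeping of the same argument.
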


The following lemma is applied to show {\bf Upward-Validity}. 
\begin{lemma} \label{lem:at_most_one}
For any round $r$, the set $S^r$  contains at most one value gradecast by each process $i$ (possibly Byzantine) at round $r$. 
\begin{proof}
For a correct process $i$, it gradecasts a single value to all processes with score 2. Thus, the set $S^r$ contains exactly one value gradecast by it at round $r$. 

For a Byzantine process $i$, suppose $S^r$ contains two different values $v_i$ and $u_i$ gradecast by process $i$ at round $r$. Then there are two correct processes $p$ and $p$ such that $p$ includes $v_i$ into $SV^r_p$ and $q$ includes $u_i$ into $SV^r_q$. From line 6 of the algorithm, we have $c_p^i \geq 1$ and $c_q^i \geq 1$ for two correct processes $p$ and $q$ at round $r$. Then the fact that $v_i \not = u_i$ contradicts property 2 of Gradecast. 
\end{proof}
\end{lemma}

% \begin{remark}
% Property $(p6)$ shows that the whole safe lattice shrinks or remain the same as round increases. This is not generally true for the safe lattice stored at an individual correct process. For a correct process $i$, there could exist a value such that enough correct processes include it into their safe value set but $i$ does not. Then, it is possible for a correct process to grade this value with score at least 1 in the next round and include it into its safe value set. 
% \end{remark}

%The following corollary shows that property 1 of Gradecast still holds with our modification. 

%The following lemma ensures that at any round, the value gradecast by a correct process is valid for each correct process, i.e, contained in the safe lattice of each correct process, which immediatedly implies that property 1 of Gradecast still holds. Note property 2 and 3 hold with or without our modification. 
Now we show the output values obtained from the algorithm satisfy all the properties of lattice agreement. For now we assume that each process decides within $F$ rounds. We will obtain the accurate value of $F$ when we analyze the round complexity. 

\begin{lemma}
The values decided by correct processes satisfy all the properties defined in the BLA problem. 
\begin{proof}
{\bf Downward-validity}. Immediate from property $(p4)$ of lemma \ref{lem:properties}. 

{\bf Comparability}. We show that for any two correct processes $i$ and $j$, either $y_i \leq y_j$ or $y_j \leq y_i$. Let processes $i$ and $j$ decide at the rounds $r_i$ and $r_j$ respectively. Assume $r_i \leq r_j$, without loss of generality. Consider the following two cases. \\
Case 1: $r_i = r_j$. By the deciding condition, $y_i$ must comparable with $y_j$. \\
Case 2: $r_i < r_j$. Ar round $r_i$, the decided value of $i$ must be received by process $j$ with score 2, by property 1 of Gradecast. Since $j$ does not decide at this round, it must take join of the decided value of $i$. Thus, $y_i \leq y_j$.

{\bf Upward-Validity}. From property $(p3)$ and $(p7)$ of lemma \ref{lem:properties}, we have that $\sqcup \{y_i ~| ~ i \in C\} \leq v^F \leq s^F \leq s^1$. To bound $s^1$, from Lemma \ref{lem:at_most_one}, we know that $S^1$ contains at most $t$ extra values not from correct processes, since there are at most $t$ Byzantine processes. Then, we have $s^1 \leq \sqcup (\{x_i ~ | ~ i \in C\} \cup B)$ for some $B \subset X$ and $|B| \leq t$. Hence, $\sqcup \{y_i ~| ~ i \in C\} \leq \sqcup (\{x_i ~ | ~ i \in C\} \cup B)$ for some $B \subset X$ and $|B| \leq t$. 
\end{proof}
\end{lemma}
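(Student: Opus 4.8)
The plan is to verify the three BLA properties one at a time, taking for granted (as the statement does) that every correct process decides within $F$ rounds, and leaning on the structural facts already proved in Lemma~\ref{lem:properties} and Lemma~\ref{lem:at_most_one}. Throughout I would write $y_i = v_i^{r_i}$, where $r_i$ is the round at which correct process $i$ decides. \textbf{Downward-Validity} is then essentially free from monotonicity: $v_i$ is initialized to $x_i$, and line~11 always reassigns $v_i$ to a join that contains the previous $v_i$ (since $i$ gradecasts its own value to itself with score~$2$, so its old value lies in $U_i^2$); hence the held value never decreases, formalized for undecided rounds by property $(p4)$, and $x_i \leq v_i^{r_i} = y_i$.

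For \textbf{Comparability} I would fix two correct processes $i,j$ with $r_i \leq r_j$ and split on whether their deciding rounds coincide. If $r_i = r_j = r$, then by Gradecast property~1 each of $i,j$ gradecasts its current value to the other with score~$2$, so $v_j^r \in U_i^2$ and $v_i^r \in U_j^2$; the deciding test at line~8 (that $v_i^r$ be comparable with every value in $U_i^2$) then forces $y_i = v_i^r$ and $y_j = v_j^r$ to be comparable. If $r_i < r_j$, the decided value $y_i$ is gradecast by $i$ with score~$2$, so by property~1 it lands in $U_j^2$ at round $r_i$, and the join at line~11 gives $v_j^{r_i} \geq y_i$; monotonicity then yields $y_j = v_j^{r_j} \geq v_j^{r_i} \geq y_i$.

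The substantive case is \textbf{Upward-Validity}. I would first chain the monotone bounds from Lemma~\ref{lem:properties}: since $i$ keeps updating $v_i$ by joins even after deciding, $y_i = v_i^{r_i} \leq v_i^F$, whence $\sqcup\{y_i \mid i \in C\} \leq v^F$; then $v^F \leq s^F$ by $(p3)$, and $s^F \leq s^{F-1} \leq \cdots \leq s^1$ by repeated application of $(p7)$. What remains is to bound $s^1 = \sqcup S^1$ by the admissible right-hand side, and this is exactly where Lemma~\ref{lem:at_most_one} enters: it guarantees $S^1$ holds at most one value per process, so the correct processes contribute precisely their inputs $\{x_i \mid i \in C\}$ (their round-$1$ gradecast values), while the Byzantine processes contribute a set $B$ of at most $t$ additional values. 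Thus $s^1 \leq \sqcup(\{x_i \mid i \in C\} \cup B)$ with $|B| \leq t$, closing the chain $\sqcup\{y_i \mid i \in C\} \leq \sqcup(\{x_i \mid i \in C\} \cup B)$.

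I expect the Byzantine-containment behind this last step to be the main obstacle, and it is worth flagging that its soundness rests on two facts established earlier. Gradecast property~2 (any two positively scored values from the same leader must agree) is what prevents a single Byzantine leader from having two distinct values admitted into correct safe sets with score $\geq 1$, which is the heart of Lemma~\ref{lem:at_most_one}; and the monotone shrinkage $\mathcal{L}(S^{r+1}) \subseteq \mathcal{L}(S^r)$ of $(p6)$ is what rules out a Byzantine process smuggling a fresh value into the safe lattice after round~$1$. Together these confine the entire Byzantine influence to the at most $t$ values already captured at the very first round, so that bounding $s^1$ suffices to bound $s^F$ and hence the join of all decided values.
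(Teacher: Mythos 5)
Your proposal is correct and follows essentially the same route as the paper's own proof: downward-validity from monotonicity/$(p4)$, the same two-case split on deciding rounds using Gradecast property~1 for comparability, and the chain $\sqcup\{y_i\} \leq v^F \leq s^F \leq s^1$ combined with Lemma~\ref{lem:at_most_one} for upward-validity. The extra detail you supply (why the deciding test forces comparability in the equal-round case, and the role of $(p6)$ in confining Byzantine influence to round~1) is consistent with, and merely elaborates, the paper's argument.
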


\subsection{Complexity Analysis}
We now analyze the complexity of our algorithm. We first show that the value of a correct process at a later round is at least the join of all values of correct processes in a previous round. 
\begin{lemma} \label{cla:correct_contains}
For any round $r$ and correct process $i$, $v^r \leq v^{r + 1}_i$. 
\begin{proof}
Since $v^r = \sqcup \{v^r_{j} ~| ~  j \in C\}$, each $v^r_j$ is gradecast by correct process $j$ to all at round $r + 1$ and by property 1 of Gradecast, process $i$ grades $j$ with score 2. Hence, process $i$ will take join of $v^r_j$ for any correct $j$, which implies $v^r \leq v^{r + 1}_i$.
\end{proof}
\end{lemma}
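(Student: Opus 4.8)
The plan is to unfold the definition of $v^r$ and then track how each correct process's round-$r$ value propagates into process $i$'s update at round $r+1$. Recall that $v^r = \sqcup \{v_j^r \mid j \in C\}$, so it suffices to show that every correct process $j$ contributes its value $v_j^r$ to the join that $i$ computes at the end of round $r+1$, i.e., that $v_j^r$ enters the set $U_i^2$ for every correct $j$. Once that is established the claim reduces to monotonicity of the join.

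First I would observe that at the start of round $r+1$ every correct process $j$ holds exactly the value $v_j^r$ that it fixed at line 11 during round $r$, and hence at line 2 it invokes $\textbf{Gradecast}(j, v_j^r)$. The crux is then to apply property 1 of Gradecast to each such correct leader. By Lemma \ref{lem:gradecast_property} this property holds provided the two standing assumptions are met --- the leader's value lies in the safe lattice of every other correct process, and no correct process appears in any correct process's bad set --- and these are precisely $(p1)$ and $(p8)$ of Lemma \ref{lem:properties}, which hold at round $r+1$. Consequently, property 1 guarantees that process $i$ grades every correct leader $j$ with $c_j = 2$ and recovers exactly the value $v_j^r$.

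It then follows directly from the definition of $U_i^2$ at line 5 that $v_j^r \in U_i^2$ for every correct $j$, and from line 11 that $v_i^{r+1} = \sqcup \{u \mid u \in U_i^2\}$. Since this join is taken over a set containing $\{v_j^r \mid j \in C\}$, monotonicity of $\sqcup$ yields $v^r = \sqcup \{v_j^r \mid j \in C\} \leq v_i^{r+1}$, as desired.

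The only genuine subtlety, and thus the step I would treat most carefully, is the appeal to property 1 of Gradecast: it is not automatic but relies on the invariants $(p1)$ and $(p8)$ to rule out a correct leader's value being filtered out by some process's safe lattice or the leader being wrongly placed in a bad set. Once those invariants are invoked, the remainder is a routine monotonicity argument, so I expect no real obstacle beyond this bookkeeping.
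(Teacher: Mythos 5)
Your proof is correct and follows the same route as the paper's: each correct $j$ gradecasts $v_j^r$ at round $r+1$, property 1 of Gradecast ensures $i$ grades it with score 2 so $v_j^r \in U_i^2$, and the join at line 11 then dominates $v^r$. The only difference is that you explicitly justify the appeal to property 1 via $(p1)$ and $(p8)$ of Lemma \ref{lem:properties}, which the paper leaves implicit here.
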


The following lemma along with property $(p5)$ and $(p7)$ of lemma \ref{lem:properties} guarantees the termination of our algorithm. 

\begin{lemma} \label{dec_2round}
If $v^r = s^r$ at the end of some round $r$, then all undecided correct processes decide in at most 2 rounds. 
\begin{proof}
Consider round $r + 1$, for each correct $i$, we have $v^r \leq v^{r + 1}_i$ from lemma \ref{cla:correct_contains} and $v^{r + 1}_i \leq s^{r + 1} \leq s^r$ from property $(p1)$ and $(p7)$. Since $v^r = s^r$, we must have $v^r = v^{r + 1}_i = s^r$ for each correct $i$. Then, at round $r + 2$, each valid value received by $i$ must be less than or equal to its current value, since its current value $v^r = s^r$ is the largest value in the whole safe lattice. Thus, the deciding condition is satisfied.
\end{proof}
\end{lemma}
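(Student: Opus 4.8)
The plan is to exploit the exact meaning of the hypothesis $v^r = s^r$: the join $v^r$ of all correct processes' values already equals the top $s^r$ of the entire safe lattice $\mathcal{L}(S^r)$, so that lattice has been \emph{saturated} and can no longer contribute anything strictly above what the correct processes collectively hold. I would split the argument into two stages. First I would show that one round later every correct process' value has collapsed to exactly $s^r$; then I would show that at the following round the deciding test at line 8 is forced to succeed for every still-undecided correct process, giving termination within two rounds.

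For the first stage I would run a squeeze argument on $v^{r+1}_i$ for an arbitrary correct process $i$. Lemma~\ref{cla:correct_contains} supplies the lower bound $v^r \leq v^{r+1}_i$. For the upper bound, property $(p1)$ of Lemma~\ref{lem:properties} gives $v^{r+1}_i \in \mathcal{L}(SV^{r+1}_j)$ for any correct $j$; since $SV^{r+1}_j \subseteq S^{r+1}$, this places $v^{r+1}_i$ inside $\mathcal{L}(S^{r+1})$, whose largest element is $s^{r+1}$, so $v^{r+1}_i \leq s^{r+1}$, and property $(p7)$ gives $s^{r+1} \leq s^r$. Combining with the hypothesis $v^r = s^r$ collapses the chain $v^r \leq v^{r+1}_i \leq s^{r+1} \leq s^r = v^r$, forcing $v^{r+1}_i = s^{r+1} = s^r$ for every correct $i$.

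For the second stage, consider round $r+2$. At the start of this round each correct process $i$ holds $v^{r+1}_i = s^{r+1}$, which is the maximum element of its safe lattice, since $SV^{r+1}_i \subseteq S^{r+1}$ and the monotonicity in $(p6)$ prevents the lattice from growing. During the Gradecast of round $r+2$, process $i$ ignores every value outside its safe lattice, so any value it grades with score $2$ --- i.e., any value entering $U_i^2$ --- lies in $\mathcal{L}(SV^{r+1}_i) \subseteq \mathcal{L}(S^{r+1})$ and is therefore $\leq s^{r+1} = v^{r+1}_i$. Thus the current value of $i$ dominates, and in particular is comparable with, every value in $U_i^2$, so the deciding condition at line 8 holds and $i$ decides at round $r+2$, at most two rounds after $r$.

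The step I expect to require the most care is pinning down exactly which safe lattice governs the round-$(r+2)$ Gradecast and matching it against the value used in the deciding test. The check at line 8 is performed on the value carried into the round, namely $v^{r+1}_i = s^{r+1}$, while the candidates in $U_i^2$ are filtered through the safe lattice $\mathcal{L}(SV^{r+1}_i)$ fixed at the end of round $r+1$; the argument hinges on the fact that ignored values are graded $0$ and hence cannot appear in $U_i^2$, so every score-$2$ value is genuinely bounded by the top of the safe lattice. Once this bookkeeping is made precise, the conclusion follows immediately from the collapse established in the first stage.
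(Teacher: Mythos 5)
Your proof is correct and follows essentially the same route as the paper's: the same squeeze $v^r \leq v^{r+1}_i \leq s^{r+1} \leq s^r$ via Lemma~\ref{cla:correct_contains} and properties $(p1)$, $(p7)$, followed by the observation that at round $r+2$ every value surviving the safe-lattice filter is bounded by $s^{r+1} = v^{r+1}_i$, forcing the deciding condition. Your extra bookkeeping about why score-$2$ values must lie in the safe lattice is a welcome clarification but not a different argument.
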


\begin{lemma} \label{round_x}
$F \leq h(X) + 2$, where $X$ is the input lattice and $h(X)$ is the height of $X$. 
\begin{proof}
From property $(p7)$ of lemma \ref{lem:properties}, the height of $s^r$ in lattice $X$, is non-increasing as $r$ increases, whereas from property $(p5)$, the height of $v^r$ in lattice $X$ is strictly increasing if there is at least one undecided process as $r$ increases. Along with property $(p3)$ of lemma \ref{lem:properties}, which shows that $v^r \leq s^r$, we must have $v^r = s^r$ in at most $h(X)$ rounds. By lemma \ref{dec_2round}, after $h(X)$ rounds, all undecided processes decide in at most 2 more rounds. Thus, $F \leq h(X) + 2$.
\end{proof}
\end{lemma}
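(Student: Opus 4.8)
The plan is to run a monovariant (potential) argument on the heights of the two auxiliary quantities $v^r$ and $s^r$ in the input lattice $X$, exploiting the fact that prior properties squeeze them together. The driving observation is a tension between two monotonicities: property $(p5)$ of Lemma \ref{lem:properties} forces the height $h_X(v^r)$ to increase strictly (by at least one) in every round in which some correct process is still undecided, while property $(p7)$ forces the height $h_X(s^r)$ to be non-increasing in $r$; and property $(p3)$ pins $v^r$ below $s^r$, so $h_X(v^r) \leq h_X(s^r)$ throughout. Together these say that the ``gap'' between the two heights can only shrink, and must shrink by at least one each round while undecided processes remain.

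Concretely, first I would set $g^r := h_X(s^r) - h_X(v^r)$ and record that $g^r \geq 0$ (from $(p3)$ together with monotonicity of height under $\leq$) and that $g^1 \leq h(X)$ (since $h_X(s^1) \leq h(X)$ and $h_X(v^1) \geq 0$). Next, for any round $r$ in which at least one correct process is undecided, I would combine $(p5)$, which gives $h_X(v^{r+1}) \geq h_X(v^r) + 1$, with $(p7)$, which gives $h_X(s^{r+1}) \leq h_X(s^r)$, to conclude $g^{r+1} \leq g^r - 1$. Hence after at most $h(X)$ rounds the gap reaches $0$, i.e. $h_X(v^r) = h_X(s^r)$ for some $r \leq h(X)$ (or else all correct processes have already decided earlier, and we are done). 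At this point I would invoke the small lattice fact that, since $v^r \leq s^r$ and the two have equal height, any strict inequality $v^r < s^r$ would let us extend a longest chain up to $v^r$ by one further step to $s^r$, contradicting equality of heights; therefore $v^r = s^r$.

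Finally, once $v^r = s^r$ holds, Lemma \ref{dec_2round} guarantees that every remaining undecided correct process decides within $2$ additional rounds, which yields $F \leq h(X) + 2$. The bulk of the difficulty has already been discharged into the earlier lemmas, namely $(p5)$, $(p7)$, and especially the two-round termination in Lemma \ref{dec_2round}, so the only genuinely new work here is the height bookkeeping. The step I expect to need the most care is the claim that comparable elements of equal height in $X$ must be equal, since it is the hinge that converts the numerical equality of heights back into the lattice equality $v^r = s^r$ required to trigger Lemma \ref{dec_2round}; everything else is a routine count of how fast the gap closes.
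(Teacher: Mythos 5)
Your proposal is correct and follows essentially the same route as the paper: squeeze $h_X(v^r)$ (strictly increasing while someone is undecided, by $(p5)$) against $h_X(s^r)$ (non-increasing, by $(p7)$) under the constraint $v^r \leq s^r$ from $(p3)$, conclude $v^r = s^r$ within $h(X)$ rounds, and finish with Lemma \ref{dec_2round}. Your explicit gap function $g^r$ and the observation that comparable elements of equal height coincide merely make precise a step the paper leaves implicit.
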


We now show the $O(\sqrt{f})$ rounds guarantee. The lemma below shows that if a bad process is included into the bad sets of all correct processes, its value would never be taken into consideration by correct processes in future rounds, i.e., these processes are completely ignored by all correct processes. 

\begin{lemma} \label{lem:bad_no_more}
For any process $b \in \cap \{B_i^{r - 1}~ | ~i \in C\}$, each correct process grades $b$ with score 0 in the Gradecast instance with $b$ as the leader at round $r$.  
\begin{proof}
In the Gradecast algorithm, all correct processes ignore values sent by processes in their bad sets. 
\end{proof}
\end{lemma}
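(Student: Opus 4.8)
The plan is to trace the three sub-rounds of the single Gradecast instance in which $b$ is the leader at round $r$, and to show that neither grading threshold ($n-f$ for score $2$, $f+1$ for score $1$) can be reached at any correct process, so every correct process must output score $0$. The hypothesis $b \in \bigcap\{B_i^{r-1} \mid i \in C\}$ means that at the start of round $r$ the process $b$ lies in the bad set of \emph{every} correct process. Combined with the rule in Fig.~\ref{fig:gradecast} that each process discards all messages received from processes in its own bad set, this is the one fact the whole argument rests on: no correct process will ever accept a value that originates from $b$.

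For sub-round $1$, I would observe that although $b$ may (being possibly Byzantine) send arbitrary and even mutually distinct values to different correct processes, each correct process $i$ simply throws these away because $b \in B_i$; hence at the end of sub-round $1$ no correct process holds a leader value. For sub-round $2$, since in Gradecast each process forwards exactly the value it received from the leader, and no correct process holds such a value, the correct processes forward nothing on behalf of $b$. The only processes that can forward a genuine value for leader $b$ are the Byzantine ones, of which there are at most $t \le f$. Consequently, when a correct process computes $\#maj$ over the values forwarded in sub-round $2$, it gets $\#maj \le f$; and because $n \ge 3f+1$ gives $n-f \ge 2f+1 > f$, the test $\#maj \ge n-f$ fails, so no correct process sends anything in sub-round $3$ for this leader.

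For sub-round $3$ and the grading step, the same counting applies: only the at most $f$ Byzantine processes can have sent a value, so the count $\#maj'$ observed by any correct process is at most $f$. Since $f < f+1 \le n-f$, both the score-$2$ branch ($\#maj' \ge n-f$) and the score-$1$ branch ($\#maj' \ge f+1$) are skipped, and the grading rule sets $v_p := \bot$, $c_p := 0$ at every correct process, which is exactly the claim. The only delicate point — and really the sole obstacle — is to make precise what ``ignoring'' $b$'s message does to the majority counts: a correct process that discards the leader's message contributes \emph{no} value (not a spurious $\bot$ that could itself become a majority) to the multiset over which $maj$ and $maj'$ are taken, so that the participating values come solely from the $\le f$ Byzantine processes. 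Once this is pinned down, the thresholds are below $f+1$ and $n-f$ purely by the resilience bound $n \ge 3f+1$, and the rest is a direct reading of the pseudocode, so I would keep the write-up short, as the author's one-line argument already anticipates.
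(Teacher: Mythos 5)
Your proposal is correct and rests on exactly the same observation as the paper's one-line proof: since $b$ lies in every correct process's bad set, all correct processes discard its messages, so only the at most $f$ Byzantine processes can supply values on $b$'s behalf, and neither the $n-f$ nor the $f+1$ grading threshold can be met. You simply spell out the sub-round counting that the paper leaves implicit, including the worthwhile clarification that an ignored leader message contributes nothing (rather than a countable $\bot$) to the majority computation.
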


We introduce the notion of terrible processes. 
 
\begin{definition}[terrible processes]
A process is terrible at round $r$ if it is graded with score 2 by at least one correct process in each round before $r$ and no correct process grades it with score 2 at round $r$.
\end{definition}

From the above definition, we observe that the terrible processes at each round are included into $B_i$ for each correct $i$ at line 6 of the algorithm. 

Let $\mathcal{L}^r$ denote the lattice formed by the safe value set at the end of round $r$, i.e, $\mathcal{L}^r = \mathcal{L}(S^r)$. Let $V^r$ denote the set of values which are assigned score of 2 by at least one correct process at round $r$. We observe that $v^r = \sqcup \{u ~|~ u \in V^r\}$, since each value in $V^r$ must be contained in the set $U_i^2$ for some correct process $i$ at line 5 of the algorithm. We also observe that $V^r \subseteq S^r$.
\begin{lemma} \label{lem:dec_fr} 
Suppose there are $f_r$ terrible processes at round $r$, then each process decides within $r + f_r + 2$ rounds. 
\begin{proof}
We first show that $h_{\mathcal{L}^r}(s^r) - h_{\mathcal{L}^r}(v^r) \leq f_r$ by showing that $|S^r - V^r| \leq f_r$. %Consider a value $v$ gradecast by a terrible process $b$ at round $r$ and is included into the safe value set of some correct process $i$. Then, we have $c^i_b = 1$. By definition of terrible process, for any other correct process $j$, we have either $c^j_b = 1$ or $c^j_b = 0$. In the first case, by property 2 of Gradecast, we must have $v^j_b = v^i_b = v$. For the second case, $j$ would not include the value from $b$ into its safe value set. Thus, each terrible process can introduce at most one value into the whole safe value set. 
Consider round $r$ of the algorithm. Let $p$ be an arbitrary process. Consider the following three cases.

Case 1. $p \in \cap \{B_i^{r - 1}~ | ~i \in C\}$. From Lemma \ref{lem:bad_no_more}, we know that both $V^r$ and $SV^r$ do not contain a value from process $p$. 

Case 2. $p \not \in \cap \{B_i^{r - 1}~ | ~i \in C\}$ and $p$ is not terrible. In this case, process $p$ must gradecast a value $u$ and assigned score of 2 by at least one correct process. Due to property 3 of Gradecast, each correct process must assign score at least 1 for value $u$. Then, value $v$ must be contained in both $V^r$ and $S^r$. 

Case 3. $p$ is a terrible process. By Lemma \ref{lem:at_most_one}, we know that $p$ can only introduce at most one value into set $S^r$. 

From the above three cases, we can conclude that the values which are in $S^r$ but not in $V^r$ must be gradecast by terrible processes at round $r$. Thus, we have $|S^r - V^r| \leq f_r$. Since $v^r$ is the join of values in $V^r$ and $s^r$ is the join of values in $SV^r$, we must have $h_{\mathcal{L}^r}(s^r) - h_{\mathcal{L}^r}(v^r) \leq f_r$. 

On the other hand, from property $(p5)$ of lemma \ref{lem:properties}, we have $h_{\mathcal{L}^r}(v^{t + 1}) > h_{\mathcal{L}^r}(v^{t})$ for any round $t \geq r$. From property $(p7)$ of lemma \ref{lem:properties}, we have $h_{\mathcal{L}^r}(s^{t + 1}) \leq h_{\mathcal{L}^r}(s^{t})$ for any round $t \geq r$. Thus, we have that $h_{\mathcal{L}^r}(v^{r + k}) = h_{\mathcal{L}^r}(s^{r + k})$ for some $k \leq f_r$. Along with property $(p3)$, we have that $v^{r + k} = s^{r + k}$ for some number $k \leq f_r$. Then, by lemma \ref{dec_2round}, all undecided correct processes decide within at most 2 rounds. Therefore, each correct process decides within $r + f_r + 2$ rounds. 
\end{proof}
\end{lemma}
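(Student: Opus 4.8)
The plan is to reduce the termination claim to tracking a single strictly-monotone rank on the fixed lattice $\mathcal{L}^r = \mathcal{L}(S^r)$, show that the gap between $s^r$ and $v^r$ under this rank is at most $f_r$, and argue that this gap closes at a rate of at least one unit per round. First I would pin down the relationship $V^r \subseteq S^r$. If a value $u$ is scored $2$ by some correct process at round $r$, then by property $3$ of Gradecast every correct process scores the same leader at least $1$, and by property $2$ they all record the identical value $u$; hence $u$ enters $SV^r_j = U^1_j$ for every correct $j$, so $u \in S^r$. This gives $V^r \subseteq S^r$ and therefore $v^r = \sqcup V^r \leq \sqcup S^r = s^r$, consistent with $(p3)$.

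Next I would bound $|S^r \setminus V^r|$ by the three-case analysis over processes. A process in $\cap_{i \in C} B_i^{r-1}$ contributes nothing to either set by Lemma \ref{lem:bad_no_more}. A process outside this intersection that is not terrible must be scored $2$ by some correct process at round $r$: staying out of all bad sets through round $r-1$ means it was scored $2$ by some correct process in every earlier round, which is the first half of terribleness, so failing to be terrible forces a score-$2$ at round $r$; its value then lies in both $V^r$ and $S^r$. Finally, a terrible process contributes at most one value to $S^r$ by Lemma \ref{lem:at_most_one} and none to $V^r$. Consequently every element of $S^r \setminus V^r$ is gradecast by a terrible process, and since distinct such elements map injectively to distinct terrible processes (each owning at most one value in $S^r$), we obtain $|S^r \setminus V^r| \leq f_r$.

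The conversion of this cardinality bound into a rank gap of at most $f_r$ inside $\mathcal{L}^r$ is where I expect the main difficulty. The safe route is to use the rank $\phi(x) = |\{w \in S^r : w \leq x\}|$, which is strictly monotone on $\mathcal{L}^r$ because every element of $\mathcal{L}^r$ equals the join of the generators of $S^r$ below it, so $x < y$ strictly enlarges the generator set. Then $\phi(s^r) = |S^r|$ and, using $V^r \subseteq \{w \in S^r : w \leq v^r\}$, we get $\phi(v^r) \geq |V^r|$, hence $\phi(s^r) - \phi(v^r) \leq |S^r \setminus V^r| \leq f_r$. The subtlety I would flag is that the height $h_{\mathcal{L}^r}$ is defined through longest chains, and adjoining a single join-generator need not raise the longest-chain height by exactly one; so I would either argue entirely through $\phi$, or else verify separately that along any chain from $v^r$ to $s^r$ the associated generator set strictly grows, which caps the length of such chains at $f_r$.

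Finally I would combine monotonicity with the earlier lemmas. By $(p6)$ all later values $v^{r+k}$ and $s^{r+k}$ remain in $\mathcal{L}^r$, so their ranks are comparable; by $(p5)$ the rank of $v^{t}$ strictly increases each round while some correct process is undecided; by $(p7)$ the rank of $s^{t}$ is non-increasing; and by $(p3)$ we always have $v^{t} \leq s^{t}$. Starting from a gap of at most $f_r$, the ranks must coincide within $f_r$ rounds, and strict monotonicity together with $v^{r+k} \leq s^{r+k}$ forces $v^{r+k} = s^{r+k}$ for some $k \leq f_r$. Lemma \ref{dec_2round} then yields decision within $2$ further rounds, for a total of at most $r + f_r + 2$ rounds, as claimed.
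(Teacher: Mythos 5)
Your proposal is correct and follows the same skeleton as the paper's proof: the identical three-case analysis (processes in every correct bad set, non-terrible survivors, terrible processes) to get $|S^r \setminus V^r| \leq f_r$, followed by the same potential argument combining $(p3)$, $(p5)$, $(p6)$, $(p7)$ and Lemma~\ref{dec_2round}. The one place you genuinely diverge is the conversion of the cardinality bound into a rank gap. The paper asserts directly that $|S^r - V^r| \leq f_r$ implies $h_{\mathcal{L}^r}(s^r) - h_{\mathcal{L}^r}(v^r) \leq f_r$ for the longest-chain height, which is not immediate: a single extra generator can, a priori, interact with the chain structure in ways that make the longest-chain argument require the auxiliary bookkeeping you describe. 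Your substitute rank $\phi(x) = |\{w \in S^r : w \leq x\}|$ sidesteps this cleanly --- it is strictly monotone on $\mathcal{L}^r$ because every element of $\mathcal{L}^r$ is the join of the generators beneath it, so $\phi(x) = \phi(y)$ with $x \leq y$ forces $x = y$; it gives $\phi(s^r) - \phi(v^r) \leq |S^r \setminus V^r|$ for free; and by $(p6)$ all later $v^t, s^t$ stay in $\mathcal{L}^r$, so the same monotone-gap argument closes in at most $f_r$ rounds. In short, you buy a rigorous justification of the one step the paper leaves implicit, at no cost to the overall structure; the paper's version is shorter but relies on the reader accepting the height-difference claim for the generated lattice without proof.
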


\begin{remark}
The crucial argument in Lemma \ref{lem:dec_fr} is that the height difference between $s^r$ and $v^r$ is at most $f_r$ in a round $r$ with at most $f_r$ terrible processes. If we simply store a safe value set instead of the safe lattice, this claim does not hold anymore.  
\end{remark}

\begin{lemma} \label{round_f}
$F \leq 2\sqrt{f} + 2$, where $f$ is the maximum number of Byzantine failures in the system such that $n \geq 3f + 1$.
\begin{proof}
Let us consider the first $\sqrt{f}$ rounds, at least one of these rounds has less than $\sqrt{f}$ terrible processes. By lemma \ref{lem:dec_fr}, starting from that round, each undecided process needs at most $\sqrt{f} + 2$ more rounds to decide. Thus, the total number of rounds is at most $2 \sqrt{f} + 2$. 
\end{proof}
\end{lemma}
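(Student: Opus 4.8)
The plan is to combine Lemma~\ref{lem:dec_fr}, which bounds the number of remaining rounds once we reach a round with few terrible processes, with a counting argument showing that such a round must occur early. The quantity $r + f_r + 2$ from Lemma~\ref{lem:dec_fr} is small precisely when we can find a round $r$ that is both small and has a small number $f_r$ of terrible processes; balancing these two terms against each other is exactly what yields the $\sqrt{f}$ bound.

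The first key step I would establish is that the total number of terrible processes, summed over all rounds, is at most $t \le f$. This rests on two facts. First, a correct process is never terrible: by property~1 of Gradecast together with properties~$(p1)$ and $(p8)$ of Lemma~\ref{lem:properties}, every correct leader is graded with score~$2$ by every correct process in every round, so it never fails to receive score~$2$; hence only Byzantine processes can be terrible. Second, each Byzantine process is terrible in at most one round: if process $b$ is terrible at round $r$, then no correct process grades it $2$ at round $r$, so every correct process assigns it score at most~$1$ and, by line~6 of the main algorithm, inserts $b$ into its bad set. By Lemma~\ref{lem:bad_no_more}, $b$ is then graded $0$ by every correct process in all subsequent rounds and can never again be graded $2$; since being terrible at a later round $r' > r$ would require $b$ to be graded $2$ by some correct process in round $r$ (one of the rounds before $r'$), this is impossible. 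Therefore the sets of terrible processes across distinct rounds are disjoint subsets of the at most $t$ Byzantine processes, and their sizes sum to at most $t \le f$.

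The second step is a pigeonhole argument over the first $\sqrt{f}$ rounds. Suppose, for contradiction, that every one of the first $\sqrt{f}$ rounds has strictly more than $\sqrt{f}$ terrible processes. Summing over these $\sqrt{f}$ rounds would then give strictly more than $\sqrt{f}\cdot\sqrt{f} = f$ terrible processes in total, contradicting the bound of the previous paragraph. Hence there is some round $r^* \le \sqrt{f}$ with $f_{r^*} \le \sqrt{f}$. Applying Lemma~\ref{lem:dec_fr} at round $r^*$, every correct process decides within $r^* + f_{r^*} + 2 \le \sqrt{f} + \sqrt{f} + 2 = 2\sqrt{f} + 2$ rounds, which is the claimed bound on $F$.

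I expect the main obstacle to be the first step rather than the pigeonhole count: one must argue carefully that a process cannot be terrible more than once, which is exactly where the interplay between the grading rule of line~6 and the ``once bad, always ignored'' guarantee of Lemma~\ref{lem:bad_no_more} enters. The pigeonhole step is then routine, with the only minor care being the strict-versus-nonstrict inequality (phrasing the contradiction hypothesis as ``strictly more than $\sqrt{f}$'' so that the total strictly exceeds $f$) and the implicit rounding of $\sqrt{f}$ to an integer number of rounds, which affects only floor/constant terms.
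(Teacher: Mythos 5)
Your proposal is correct and follows essentially the same route as the paper's proof: a pigeonhole argument over the first $\sqrt{f}$ rounds combined with Lemma~\ref{lem:dec_fr}. The only difference is that you make explicit what the paper leaves implicit --- that terrible processes are Byzantine, that the sets of terrible processes in distinct rounds are disjoint (which follows directly from the definition, since being terrible at round $r$ precludes the ``graded $2$ in every earlier round'' condition at any later round), and hence that the per-round counts sum to at most $f$ --- which is a welcome tightening of the paper's one-line justification.
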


\subsection{Early Stopping}
We say an algorithm has the early stopping property if its running time depends on the actual number of Byzantine processes during an execution of the algorithm. To obtain the early stopping property, we let each correct process dynamically update its termination round, which denotes the round number such that a correct process terminates the algorithm in any case. Let $t_i$ denote the termination round of the correct process $i$, initially set to be $F$. The value of $t_i$ is updated at each round in the following way. Consider a round $r$. Suppose $k^r_i$ processes are included into the bad set of $i$ at this round, then we set $t_i := \min \{t_i, r + k^r_i + 2\}$. Suppose we have $k$ actual Byzantine processes during an execution of the algorithm. Consider the first $\sqrt{k}$ rounds for an arbitrary correct process $i$, there must be a round $r'$ such that at most $\sqrt{k}$ processes are included into the bad set of $i$, then the termination round is at most $r' + \sqrt{k} + 2$. Thus, the termination round of $i$ is at most $2\sqrt{k} + 2$. Since the above argument applies for any correct process, all correct process must terminate in at most $2 \sqrt{k} + 2$ rounds.

\mainTheorem* 

\begin{proof}
 The round complexity follows from lemma \ref{round_x} and \ref{round_f} and the fact that the Gradecast step takes 3 sub-rounds. For message complexity, each Gradecast instance takes $O(n^2)$ messages. Since we have at most $n$ Gradecast instance at each round, each round takes $O(n^3)$ messages in total. Thus, the total number of messages is $O(n^3 \min\{h(X), \sqrt{f}\})$. 
By combining messages of different instances of Gradecast, the message
complexity for $n$ instances of Gradecast can be reduced from $O(n^3)$
to $O(n^2)$. Each message is a vector corresponding to $n$ instances of Gradecast. The message size can be reduced from $O(n)$ to $O(f)$ using
techniques based on error correcting codes as shown in \cite{DBLP:conf/sss/BridgmanG12}.
\end{proof}

\begin{corollary}
There is a $\min\{4h(X) + 8,8\sqrt{f} + 8\}$ rounds algorithm for the authenticated BLA problem in synchronous systems, which can tolerate $f < \frac{n}{2}$ Byzantine failures. The algorithm takes $O(n^2 \min\{h(X), \sqrt{f}\})$ messages. 
\begin{proof}
The paper \cite{katz2006expected} presents a 4-round Gradecast algorithm in the authenticated setting with $f$ failures such that $n \geq 2f + 1$, which also guarantees the three properties of gradecast. Then, we can modify this 4-round Gradecast algorithm in the authenticated setting in the same way as in the general setting. The main algorithm simply applies this modified 4-round Gradecast algorithm as a building block. The correctness proof remains the same. The time c omplexity is straightforward. 
\end{proof}
\end{corollary}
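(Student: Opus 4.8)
The plan is to treat the modified Gradecast primitive as a black box and to argue that the entire development of Section~3 depends on the underlying broadcast subroutine \emph{only} through the three abstract Gradecast properties, together with the safe-lattice and bad-set filtering layered on top. If this modularity holds, then substituting the 4-round authenticated Gradecast of Katz et al.~\cite{katz2006expected} --- which satisfies the same three properties whenever $n \geq 2f + 1$ --- immediately upgrades the resilience to $f < \frac{n}{2}$, while the round count merely rescales by the new per-round sub-round cost.

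First I would record that \cite{katz2006expected} supplies a Gradecast meeting properties 1--3 under $n \geq 2f+1$, and apply verbatim the modification of Section~3.1: each process maintains $SV_i$, forms the safe lattice $\mathcal{L}(SV_i)$, ignores any received value outside $\mathcal{L}(SV_i)$, and ignores all messages from processes in $B_i$. I would then re-establish the analog of Lemma~\ref{lem:gradecast_property} for this 4-round primitive. Properties 2 and 3 are inherited unchanged from Katz's primitive, and property 1 follows exactly as before: under the standing assumption that a correct leader's value lies in every correct process's safe lattice and that no correct process belongs to any correct bad set, no correct process filters the correct leader's value, so the authenticated Gradecast grades it with score $2$ everywhere. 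Importantly, this argument never refers to the internal thresholds ($n-f$, $f+1$) of the unauthenticated Gradecast, so it transfers cleanly.

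Next I would plug the modified 4-round Gradecast into the unchanged main algorithm of Fig.~\ref{fig:main} and carry the correctness proof over without alteration. The step deserving the most care --- and the main obstacle --- is to audit that the fault model enters the BLA analysis \emph{only} via Gradecast properties 1--3 and the filtering guarantees, and never via a bare use of $n \geq 3f+1$. Concretely, (p1) uses property 3, (p6) uses filtering, Lemma~\ref{lem:at_most_one} uses property 2, and Case 2 of Lemma~\ref{lem:dec_fr} uses property 3; moreover (p1) and (p8) continue to discharge the standing assumption of Lemma~\ref{lem:gradecast_property}. Once this audit confirms that the inequality $n \geq 3f+1$ is invoked nowhere outside the Gradecast primitive itself, Downward-Validity, Comparability, and Upward-Validity all hold under the weaker assumption $f < \frac{n}{2}$.

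Finally, for complexity: the bound $F \leq \min\{h(X)+2,\,2\sqrt{f}+2\}$ from Lemmas~\ref{round_x} and~\ref{round_f} rests only on (p3), (p5), (p7) and the terrible-process counting of Lemma~\ref{lem:dec_fr}, all unaffected, so $F$ is unchanged and each of its rounds now costs $4$ sub-rounds instead of $3$, yielding $\min\{4h(X)+8,\,8\sqrt{f}+8\}$ total; the early-stopping argument of Section~3.4 transfers identically. Each authenticated Gradecast instance still costs $O(n^2)$ messages, and the same combining of the $n$ concurrent instances per round gives $O(n^2 \min\{h(X),\sqrt{f}\})$. The one spot where the loose "straightforward" hides a small check is confirming that the vector-combining and error-correcting-code compression of \cite{DBLP:conf/sss/BridgmanG12} remain compatible with the signatures carried by Katz's primitive.
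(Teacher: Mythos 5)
Your proposal is correct and follows essentially the same route as the paper's proof: substitute the 4-round authenticated Gradecast of Katz et al.\ for the 3-round primitive, observe that the three Gradecast properties (and the safe-lattice/bad-set filtering) are all the main algorithm ever uses, and rescale the round count from $3F$-plus-overhead to $4F$-plus-overhead. Your explicit audit that $n \geq 3f+1$ is never invoked outside the Gradecast subroutine is a worthwhile elaboration of what the paper leaves implicit in ``the correctness proof remains the same,'' but it is the same argument, not a different one.
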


\section{$3\log n + 3$ Rounds Algorithm for BLA}
The $3\log n + 3$ round algorithm shown in this section is inspired by algorithms proposed for crash failure model in \cite{attiya1995atomic, zheng2018lattice}. The basic idea is to divide a group of processes into the slave subgroup and the master subgroup based on process ids, and ensure the property that the value of any correct process in the slave group is at most the value of any correct process in the master group. With the above property, if we recurse within each subgroup, then all correct processes can obtain comparable values in $O(\log n)$ rounds. In the Byzantine failure model, however, simply ensuring the above property is not enough. For example, suppose we divided a group of processes $G$ into the slave group $S(G)$ and the master group $M(G)$ such that the above property is satisfied. Suppose there is a Byzantine process in $S(G)$, it might send a value to some correct process in $S(G)$ in a later round such that the value is not known by correct processes in $M(G)$. Then, a slave process might have value which is greater than some master process. 

In order to prevent such cases, our algorithm introduces two novel ideas. First, when we divide a group into the slave subgroup and the master subgroup, we apply the {\bf Gradecast} algorithm to guarantee that the value of a slave process is at most the value of a master process. The {\bf Gradecast} algorithm serves the same purpose as the {\em Classifier} procedure as given in \cite{zheng2018lattice,zheng2018linearizable}. A nice property of the {\bf Gradecast} algorithm is that if some correct process assigns score 2 for the value gradecast by the leader, then each other correct process assigns score at least 1 for this value. Suppose we let each process in a group gradecast its value. Let $U^2$ denote the set of values assigned score of 2 by some correct process. Let $U^1$ denote the set of values assigned score of at least 1 by some correct process. Then, we must have $U^2 \subseteq U^1$. If each process in the master group updates their value to $U^1$ and each process in the slave group updates their value to $U^2$, then it is guaranteed that all values of the slave group is a subset of the values of each master process. However, the above property is only guaranteed at the current recursion level. Suppose there is a Byzantine process in the slave group, then it can gradecast a new value, which is not contained in the value set of some master process, to correct processes in the slave group. Then, the above property does not hold anymore. We need to ensure that the values of all slave processes are always a subset of the values of each master process when the recursion within each subgroup continues. To achieve that, each process keeps track of a safe value set for each other process and regards any value received from that process but not in the safe value set as invalid. This is also different from the algorithm in last section, where each process just keeps track of one single safe lattice for all. These safe values sets are used to restrict what values a process in a slave group can send. If we can guarantee that the union of all safe value sets for processes in the slave group a subset of the value set of each master process, then we can ensure that the above property continues to hold.

\begin{figure}[htp]
\fbox{\begin{minipage}[t]  {6in}
\underline{{\bf SetGradecast($q$, $V$)}:} \\
$V$ is a set of unique values to be gradecast by leader $q$.\\
Each process $i$ keeps track of a safe array $S_i$ of size $n$ with $S_i[j]$ storing a safe value set for $j$. \\
Process $i$ regards a value received from $j$ but not in $S_i[j]$ as invalid. \\
Each process ignores a set received if the set contains duplicate values \\

{\bf Round 1:} \\ 
Leader $q$ sends $V$ to all \\

/* Each process $p$ executes round 2 and round 3 */ \\
{\bf Round 2:} \\
/* Echo */ \\
process $p$ sends the set received from $q$ if any to all with invalid values removed \\
Let $<j, V_j>$ denote that $p$ received $V_j$ from $j$ with invalid values removed \\
Define multiset $U := \bigcup\limits_{j = 1}^{n} V_j$\\
Process $p$ constructs a set $U'$ as follows \\
{\bf for} each unique value $v \in U$ \\
\h Let $v_f$ denote its frequency in $U$\\
\h Add $v$ into $U'$ if $v_f \geq n - f$ \\
{\bf endfor} \\

{\bf Round 3:} \\
Process $p$ sends $U'$ to all \\
Let $<j, V_j>$ represent that $p$ received $V_j$ from $j$ with invalid values removed \\
Let $U := \bigcup\limits_{j = 1}^{n} V_j$\\

/* Grading */\\
Let $R$ denote a set to be returned \\
{\bf for } each value $v \in U$ \\
\h Let $v_f$ denote its frequency in $U$ \\
\h {\bf if} $v_f \geq n - f$ \\
\h\h set $c^p_v := 2$, add $v$ into $R$\\
\h {\bf elif} $v_f \geq f + 1$ \\
\h\h $c^p_v := 1$, add $v$ into $R$\\
\h {\bf else} \\
\h\h $c^p_v := 0$ \\
\h {\bf endif}\\
{\bf endfor} \\

Let $C_p$ denote the map storing the score for each value $v \in R$, i.e., $C_p[v] = c^p_v$. \\
Return$<R, C_p> $
\end{minipage} 
}
\caption{The {\bf SetGradecast} Algorithm}
\label{fig:setgradecast}
\end{figure}

\subsection{The SetGradecast Algorithm}

In the $3\log n + 3$ rounds algorithm, a process needs to gradecast a set of values instead of just one single value. Thus, we first present the following {\bf SetGradecast} algorithm, shown in Fig. \ref{fig:setgradecast}, which is similar to the {\bf Gradecast} algorithm in the previous section, except that it is used to gradecast a set of values. In the $3\log n + 3$ rounds algorithm, each process $i$ keeps track of a safe array $S_i$ of size $n$ with $S_i[j]$ being a safe value set for process $j$. 

Process $i$ considers a value $v$ received from process $j$ as valid if $v \in S_i[j]$, otherwise invalid. Process $i$ uses $S_i$ to filter out invalid values received from any other process in the {\bf SetGradecast} algorithm. We will show how to construct and update the safe value array for each process in the main algorithm. 

In the {\bf SetGradecast} algorithm, we assume that the leader needs to gradecast a set of distinct values, which can be guaranteed by introducing some unique tags for each value. If a process receives a message which contains duplicate values from some leader, it just ignore the message. Similar to the {\bf Gradecast} algorithm, a process in the {\bf SetGradecast} algorithm needs to send any valid value received at round 2 and round 3 from at least $n - f$ different processes to all other processes. Each process $i$ returns a triple $<j, R_j, C_j>$ when process $j$ invokes the {\bf SetGradecast} as the leader. The set $R_j$ is the set of values gradecast by process $j$ with score at least 1 and the map $C_j$ stores the score assigned by process $i$ for each value in $R_j$. 

Let $v$ be an arbitrary value gradecast by the leader. Let $c^i_v$ denote the score of $v$ assigned by process $i$. 
\begin{lemma}
Algorithm {\bf SetGradecast} has the following properties. 

1. If a value $v$ gradecast by a correct leader $i$ is in the safe value set of each correct process $j$ for $i$, i.e., $v \in S_j[i]$ for each correct $j \in [n]$, then the score of $v$ assigned by each correct $j$ must be 2, i.e., $c^j_v = 2$.

2. Let $v$ be an arbitrary value gradecast by the leader. Then $|c^i_v - c^j_v| \leq 1$ for any two correct process $i$ and $j$.

3. If a value $v$ gradecast a leader $i$ is not in the safe value set of any correct process for $i$, i.e., $v \not \in S_j[i]$ for any $j \in C$, then $c^j_v = 0$ for each $j \in C$. 
\begin{proof}
For property 1, since the leader is correct, it must send the same value $v$ to each process at round 1. Since $v$ is contained in the safe value set of each correct process $j$ for the leader, then at round $2$ each correct process $j$ will receive $v$ from each correct process. It then sends $v$ to each process at round 3. At round 3, each correct $j$ receives $v$ from each correct process, thus the score of $v$ assigned by each correct $j$ must be 2. 

For property 2, we only need to show that if any correct $j$ has $c^j_v = 2$, then any other correct process $k$ must have $c^k_{v} \geq 1$. Since $j$ has $c^j_{v} = 2$, then process $j$ must have received $v$ from at least $n - f$ processes at round 3, which contains at least $n - 2f$ correct processes. Thus, for any other correct process $k$, it must have received $v$ from at least $n - 2f$ correct processes, which is at least $f + 1$ by the assumption of $n \geq 3f + 1$. Therefore, $c^k_{v} \geq 1$.  

For $v$ to be assigned score at least 1 by a correct process $j$, process $j$ has to receive this value from at least $f + 1$ different processes at round 3. Since each correct process $j$ considers any value from $i$ but not in $S_j[i]$ as invalid and there are at most $f$ Byzantine processes, any correct process can receive $v$ from at most $f$ different processes. Thus, it must be assigned score of 0.
\end{proof}
\end{lemma}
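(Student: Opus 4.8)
The plan is to prove the three properties of \textbf{SetGradecast} by mirroring the single-value Gradecast argument, but tracking each value $v$ independently through the three rounds and carefully accounting for how the safe-array filtering interacts with the frequency thresholds. The key quantitative facts I will lean on are the standard $n \geq 3f+1$ inequalities: among any $n-f$ senders at least $n-2f \geq f+1$ are correct, and a correct process cannot receive an invalid (filtered-out) value from more than $f$ senders, since only Byzantine processes can supply a value absent from the safe set.

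For \textbf{property 1}, I would argue round by round. If the leader $i$ is correct, it sends the same set containing $v$ to all at round~1. The hypothesis $v \in S_j[i]$ for every correct $j$ guarantees that no correct process filters $v$ out as invalid, so at round~2 each correct process echoes $v$ and therefore every correct process receives $v$ from at least the $n-f$ correct processes; this clears the $n-f$ threshold, so every correct process adds $v$ to its $U'$ and forwards it at round~3. At round~3 the same counting gives each correct $j$ at least $n-f$ copies of $v$, so the grading assigns $c^j_v = 2$. The only subtlety is confirming the threshold comparisons go through with $n \geq 3f+1$, which is routine.

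For \textbf{property 2}, following the hint in the statement, it suffices to show that whenever a correct $j$ grades $c^j_v = 2$, any other correct $k$ grades $c^k_v \geq 1$. From $c^j_v = 2$, process $j$ received $v$ from at least $n-f$ processes at round~3, of which at least $n-2f \geq f+1$ are correct. Each such correct process sent $v$ in its round-3 message because it placed $v$ in its own $U'$, so $k$ likewise receives $v$ from all of them, i.e.\ from at least $f+1$ processes, yielding $c^k_v \geq 1$. I would make explicit that a correct process's round-3 set $U'$ is sent identically to everyone, so the count of senders seen by $k$ is the same as that seen by $j$ restricted to correct processes.

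For \textbf{property 3}, the argument is a simple filtering bound: if $v \notin S_j[i]$ for every correct $j$, then each correct process discards $v$ whenever it arrives purportedly from the leader's value, so the only processes that can ever contribute a round-3 copy of $v$ (after filtering) to a correct recipient are the at most $f$ Byzantine ones. Hence any correct process sees fewer than $f+1$ copies and assigns $c^j_v = 0$. The main obstacle across all three parts is bookkeeping rather than depth: I must be precise about \emph{which} value-count each threshold refers to (round-2 versus round-3, pre- versus post-filtering) and verify that the safe-array filtering never suppresses a legitimately broadcast value from a correct leader while reliably suppressing a value that no correct process ever sanctions. Once those counting invariants are stated cleanly, each property reduces to a one-line threshold comparison under $n \geq 3f+1$.
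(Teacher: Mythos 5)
Your proposal is correct and follows essentially the same route as the paper's own proof: property 1 by tracing $v$ through the echo and grading rounds using the fact that all $\geq n-f$ correct processes forward it, property 2 by the standard $n-2f \geq f+1$ overlap argument showing a score of 2 at one correct process forces at least $f+1$ correct senders visible to every other correct process, and property 3 by observing that only the at most $f$ Byzantine processes can supply a value filtered out by every correct process's safe set. The only difference is that you spell out the round-2 threshold check explicitly, which the paper leaves implicit.
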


\begin{figure}[htp]
    \centering
    \fbox{\begin{minipage}[t]  {6in}
Initially all process are within the same group $G_1$ \\
Let $\mathcal{G}_r$ denote the collection of groups at round $r$, initially $\mathcal{G}_1 = \{G_1\}$ \\
/* Each process $i \in [n]$ stores the following variables */ \\
$x_i$: input value for process $i$\\ 
$V_i$: value set of process $i$ with $V_i := \{x_i\}$ initially. \\
$S_i$: an array of size $n$ with $S_i[j]$ being the safe value set for process $j$. \\

/* Build the Initial Safe Array */ \\
{\bf 1: } {\bf for} each process $i$, {\bf in parallel do} \\
{\bf 2: } \h Process $i$ invokes {\bf Gradecast}($i$, $x_i$) \\
{\bf 3: } \h Let $<j,v_j, c_j>$ denote the tuple obtained from the gradecast of $p_j$\\
{\bf 4: } \h Set $S_i[k] := \{v_j ~|~ c_j \geq 1, j \in [n]\}$ for each process $k \in [n]$\\
{\bf 5: } {\bf endfor} \\

{\bf 6: } {\bf for} $r := 1$ to $\log n$ \\
{\bf 7: } \h Divide each group $G \in \mathcal{G}_r$ into slave group $S(G)$ and master group $M(G)$ based on ids\\
{\bf 8: } \h Let $\mathcal{G}_{r + 1}$ denote the collection of new groups\\
{\bf 9: } \h Each slave process $p$ executes {\bf SetGradecast}($p$, $V_p$) \\
{\bf 10:} \h  {\bf for} each process $i$, {\bf in parallel do} \\
{\bf 11:} \h\h Let $<j, Val_j^i, C_j^i>$ denote the value-score tuple that $p_i$ received from the gradecast of $p_j$, \h\h\h~ with $Val_j^i$ being the values and $C_j^i$ being the score map\\ 
{\bf 12:} \h\h Let $R_j^i$ denote the set of values assigned score of 2 by $p_i$ in the gradecast of $p_j$, \\ \h\h\h~ i.e, $R_j^i := \{v \in Val_j^i ~ | ~ C_j^i[v] = 2\}$ \\
{\bf 13:} \h\h  {\bf for} each group $G \in \mathcal{G}_r$ \\
{\bf 14:} \h\h\h Let $U_1$ denote the set of values gradecast by processes in $S(G)$ with score at least 1 \\ \h\h\h \h ~i.e.,  $U_1 := \bigcup\limits_{j \in S(G)} Val_j^i$ \\
{\bf 15:} \h\h\h Let $U_2$ denote the set of values gradecast by processes in $S(G)$ with score 2\\ \h\h\h \h ~i.e., $U_2 := \bigcup\limits_{j \in S(G)} R_j^i$ \\
{\bf 16:} \h\h\h $S_i[j] := U_2$ for each process $j \in S(G)$ \\ 
{\bf 17:} \h\h\h $S_i[j] := S_i[j] \cup U_1$ for each process $j \in M(G)$,  \\ 
{\bf 18:} \h\h\h {\bf if} $i \in S(G)$ {\bf then} $V_i := U_2$ \\
{\bf 19:} \h\h\h {\bf elif} $i \in M(G)$ {\bf then} $V_i := U_1$\\
{\bf 20:} \h\h {\bf endfor} \\
{\bf 21:} \h {\bf endfor} \\
{\bf 22:} {\bf endfor} \\
{\bf 23:} Output $y_i := \sqcup \{v \in V_i\}$
\end{minipage} 
}
\caption{The $3\log n + 3$ Rounds Algorithm for the BLA Problem}
\label{fig:logn_algo}
\end{figure}

\subsection{The Main Algorithm}
The $3\log n + 3$ rounds algorithm is shown in Fig. \ref{fig:logn_algo}. In the algorithm, each process $i$ stores a value set $V_i$ which is updated at each round. Initially, $V_i = \{x_i\}$. Each process $i$ keeps track of a safe array $S_i$ of size $n$ with $S_i[j]$ being the safe value set for $j$. Process $i$ regards any value received from $j$ which is not in $S_i[j]$ as invalid and thus ignores it. Different processes may have different safe value set for a process $j$. Initially, all processes are in the same group, denoted as $G_1$. During the algorithm, processes might be divided into different groups. The algorithm proceeds as follows. 

The initial round from line 1 to line 5 is used to build the initial safe array of each process $i$. At this round, each process $i$ invokes the {\bf Gradecast} algorithm to send its input value $x_i$ to all. Each process $i$ constructs the same initial safe value set for each process $j$, which includes all values gradecast by some process and assigned score of at least 1 by process $i$. We will show later that this initial round of gradecast guarantees {\bf Upward-Validity} of the BLA problem. Intuitively, this is because each Byzantine process can only introduce one value into the safe value sets by properties of Gradecast. 

In lines 6-22, at each round $r$ from 1 to $\log n$, each group $G$ is divided into two subgroups: the slave group $S(G)$ and the master group $M(G)$, where $S(G)$ contains all processes in $G$ with ids in the lower half and $M(G)$ contains all processes in $G$ with ids in the upper half. Each process in $S(G)$ invokes {\bf SetGradecast} to gradecast its current value set to all processes. Processes in $M(G)$ do not gradecast their value sets. After this step, for each group $G$, process $i$ obtains a set of values gradecast by processes in $S(G)$. From line 13 to line 20, process $i$ updates its safe value set $S_i$ and its value set $V_i$ based the values obtained in all {\bf SetGradecast} instances. At line 16, process $i$ updates its safe value set for each process $j \in S(G)$ to be the values gradecast with score 2 by some process in $S(G)$. At line 17, it updates the safe value set for each process in $M(G)$ to be the values gradecast with score at least 1 by some process in $S(G)$. If process $i$ is a slave process in $S(G)$, it updates its value set to be the set of values gradecast by processes in $S(G)$ and assigned score of 2. If it is a master process in $M(G)$, it updates its value set to be the set of values gradecast by processes in $S(G)$ with score at least 1. %By the property of {\bf SetGradecast}, the set of values assigned score 2 by process $i$ is always a subset of the set of values assigned score at least 1 by process $j$. 

Consider group $G$ at round $r$. Let $S(G)$ and $M(G)$ denote the slave group and the master group obtained when dividing $G$ at round $r$. Each master process in $M(G)$ updates its value set to be the set of values gradecast by processes in $S(G)$ and assigned score of at least 1 by itself. Each slave process in $S(G)$ updates its value set to be the set of values gradecast by processes in $S(G)$ and assigned score 2 by itself. By property of {\bf SetGradecast}, we can observe that at the end of round $r$, the value set of each correct slave process in $S(G)$ is a subset of the value set of each correct master process in $M(G)$. 
 
\newpage
\subsection{Correctness and Complexity}
Now we analyze the correctness and complexity of our algorithm. For any group $G$, let $S(G)$ and $M(G)$ denote the slave group and the master group obtained when dividing $G$. The variables we use for analysis are given in Table \ref{tab:variable_logn}.

\begin{table*}[h]
\caption{Notations}
\centering
\begin{tabular}{ |c|c|} 
 \hline
 \textbf{Variable} & \textbf{Definition} \\ \hline
$V_i^r$ & The value set held by process $i$ at the end of round $r$.  \\ \hline
$S_i^r$ & The safe value array of $i$ at the end of round $r$. \\ \hline 
$SF_j^r$ & \makecell{Auxiliary variable. The union of the safe value sets of all correct process  \\ for $j$ at the end of round $r$, i.e., $SF_j^r := \{S_i^r[j] ~| ~ i \in C\}$}\\ \hline
$SF_G^r$ & \makecell{Auxiliary variable. The union of the safe value sets of all correct process \\ for processes in group $G$ at the end of round $r$, i.e., $SF_G^r := \bigcup\limits_{j \in G} SF_j^{r}$} \\ \hline 
\end{tabular}
\label{tab:variable_logn}
\end{table*}

%The following lemma shows that the union of safe value sets of correct processes for the slave group and master group are always

\begin{lemma} \label{lem:cls_logn}
Let $G$ be a group which is divided into $S(G)$ and $M(G)$ at round $r$. Then \\
%(p1) For any two correct process $i \in S(G)$ and $j \in M(G)$, $SF_{S(G)}^r \subseteq V_j^r$ \\
(p1) $SF_{S(G)}^{r} \subseteq SF_{G}^{r - 1}$ \\
(p2) $SF_{M(G)}^{r} \subseteq SF_{G}^{r - 1}$ \\
(p3) For each correct process $i \in G$, $V_i^{r} \subseteq SF_G^{r - 1}$ \\
%(p4) $SF_{S(G)}^{r} \subseteq V_j^r$ for each correct process $j \in M(G)$ \\
\begin{proof}
{\bf (p1):} At round $r$, processes in group $S(G)$ gradecast their values to all. Consider an arbitrary value $v \in SF^r_{S(G)}$. From line 16, we know that $v$ must be gradecast by some process in $S(G)$ and assigned score of 2 by at least one correct process. This implies that value $v$ must be in the safe value set of at least one correct processes for $j$ at the beginning of round $r$ by property 3 of {\bf SetGradecast}. Thus, value $v$ must be in $SF_G^{r - 1}$. 

{\bf (p2):} From line 17 of the algorithm, for an arbitrary process $j \in M(G)$, each correct process $i$ sets $S_i^r[j] := S_i^{r - 1}[j] \cup U_1$, with $U_1$ being the set of values gradecast by some process in $S(G)$ at round $r$ and assigned score of at least 1 by process $i$. Thus, any value $v \in U_1$ must be in the safe value set for $j$ of at least one correct process by property 3 of {\bf SetGradecast}. Then, $U_1$ must be in $SF_G^{r - 1}$. Therefore, $SF_{M(G)}^{r} \subseteq SF_{G}^{r - 1}$. 

{\bf (p3):} From line 18-19, we know that $V_i^r$ is either updated to be the set of values assigned score 2 or score at least 1 by process $i$ at round $r$. Then, by property 3 of {\bf SetGradecast}, $V_i^r \subseteq SF_G^{r - 1}$. 

%{\bf (p4):} Consider an arbitrary value $v \in SF^r_{S(G)}$. From line 16, we know that $v$ must be gradecast by some process in $S(G)$ and assigned score of 2 by at least one correct process. Then, due to property 2 of {\bf SetGradecast}, value $v$ must be assigned score at least 1 by process $j$. Process $j$ must include $v$ into its value set at line 19. Thus, we have $SF^r_{S(G)} \subseteq V_j^r$. 
\end{proof}
\end{lemma}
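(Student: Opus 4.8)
The plan is to prove all three containments from a single underlying observation, so it is worth isolating that observation first. Every value that is written into a safe value set or a value set during round $r$ (at lines 16, 17, and 18--19) is a value that was gradecast by some process of $S(G)$ and received a positive score from at least one correct process. The key tool is the contrapositive of property 3 of {\bf SetGradecast}: if a value $v$ gradecast by a leader $k$ receives score at least $1$ from some correct process, then $v$ cannot have been absent from the safe value set for $k$ of every correct process at the start of the round; that is, $v \in S_{i'}^{r-1}[k]$ for some correct $i'$, and hence $v \in SF_k^{r-1}$. Since in all of these updates the gradecaster $k$ lies in $S(G) \subseteq G$, each such value traces back to membership in $SF_G^{r-1}$.

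For (p1) I would start from line 16: for a correct process $i$ and any $j \in S(G)$, the updated set $S_i^r[j]$ equals $U_2 = \bigcup_{k \in S(G)} R_k^i$, so an arbitrary $v \in S_i^r[j]$ was gradecast by some $k \in S(G)$ and assigned score $2$ by the correct process $i$. Applying the observation above (with score $2 \geq 1$) yields $v \in SF_k^{r-1} \subseteq SF_G^{r-1}$, using $k \in S(G) \subseteq G$. Taking the union over $j \in S(G)$ and over correct $i$ gives $SF_{S(G)}^r \subseteq SF_G^{r-1}$.

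For (p2) the relevant update is line 17, $S_i^r[j] = S_i^{r-1}[j] \cup U_1$ for $j \in M(G)$, so I would split into two cases. A value coming from the carried-over term $S_i^{r-1}[j]$ lies in $SF_j^{r-1} \subseteq SF_G^{r-1}$ because $j \in M(G) \subseteq G$; a value coming from $U_1 = \bigcup_{k \in S(G)} Val_k^i$ was gradecast by some $k \in S(G)$ with score at least $1$ from the correct process $i$, so the same observation places it in $SF_k^{r-1} \subseteq SF_G^{r-1}$. For (p3), lines 18--19 set $V_i^r$ equal to $U_2$ when $i \in S(G)$ and to $U_1$ when $i \in M(G)$; in both cases $V_i^r$ is a set of values gradecast by processes of $S(G)$ and scored at least $1$ by the correct process $i$, so the identical argument applies and $V_i^r \subseteq SF_G^{r-1}$.

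I expect the only real subtlety -- and the step worth stating carefully rather than any calculation -- to be the exact direction in which property 3 of {\bf SetGradecast} is invoked: it is the contrapositive (a positive score forces prior presence in some correct process's safe set) rather than the literal statement, and one must identify the leader $k$ whose safe set is being invoked as a member of $S(G)$ so that the conclusion lands in $SF_{S(G)}^{r-1} \subseteq SF_G^{r-1}$ and not in some set outside $G$. The carried-over term $S_i^{r-1}[j]$ in (p2) is the one place where the witness lies in $M(G)$ rather than $S(G)$, and it is the easiest case to omit by mistake, so I would handle it explicitly.
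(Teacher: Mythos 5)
Your proposal is correct and follows essentially the same route as the paper: all three containments are reduced to the contrapositive of property 3 of \textbf{SetGradecast}, tracing each newly written value back to the safe value set of some correct process for its gradecaster in $S(G)$ at the start of round $r$. Your explicit treatment of the carried-over term $S_i^{r-1}[j]$ in (p2), and your care in identifying that the leader's (not the target process's) safe set is the one property 3 speaks about, are small improvements in precision over the paper's wording but do not change the argument.
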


The following lemma shows that if a value in the value set of correct process $i$ is contained in the safe value set of each correct process for process $i$, then this value remains in the value set of process $i$. 
\begin{lemma} \label{lem:correct_value}
Consider an arbitrary value $v \in V^r_j$ of correct process $j$. If it is contained in $S^r_i[j]$ for each correct process $i$, then we have \\
(1) $v \in S^t_i[j]$ for each correct process $i$ and $t \geq r$. \\
(2) $v \in V^t_j$ for any $t \geq r$. 
\begin{proof}
(1). By induction on the round number. For the base case $t = r$, the claim holds. Consider round $t > r$ and assume that (1) and (2) are satisfied for any round before $t$. In the algorithm, $S_i[j]$ can only shrink at line 21 and $V_j$ can only shrink at line 24, which happens when process $j$ is divided into the slave group at round $t$. Then $S_i[j]$ is updated to be the set of values gradecast by some process in the group of $j$ and assigned score of 2 by process $i$. At round $t$, since $v \in V^{t - 1}_j$ by induction hypothesis and $j$ is slave at round $t$, process $j$ must gradecast value $v$. Since $v$ is in the safe value set of each correct $i$ by induction hypothesis, we must have $c^i_v = 2$  by property 1 of {\bf SetGradecast}. Thus, value $v$ must be included into $S_i[j]$ by each correct $i$ at line 16 and $v$ must included into $V^t_j$ at line 18 by process $j$. 
\end{proof}
\end{lemma}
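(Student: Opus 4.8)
The plan is to prove parts (1) and (2) together by a single induction on the round number $t$, rather than proving them one after another. The two statements are genuinely interdependent: to show that $v$ survives in $j$'s own value set ($v \in V^t_j$) I will need to know that $v$ was still in $V^{t-1}_j$ so that $j$ actually (re)gradecasts it, while to show that $v$ survives in the safe sets ($v \in S^t_i[j]$ for every correct $i$) I will need that $v$ lay in \emph{every} correct safe set for $j$ at the start of round $t$, so that the grading guarantee of \textbf{SetGradecast} can be invoked. The base case $t = r$ is exactly the hypothesis of the lemma: $v \in V^r_j$ gives part (2), and $v \in S^r_i[j]$ for each correct $i$ gives part (1). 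For the inductive step I assume both statements at round $t-1$ and argue that $v$ survives one more round.

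The first structural observation I would record is that, in the main algorithm, the sets $S_i[j]$ and $V_j$ are only ever \emph{overwritten} (and hence can lose $v$) when the current group containing $j$ is split and $j$ is placed in the slave subgroup; when $j$ lands in the master subgroup the relevant updates only add elements, so $v$ persists trivially and both parts hold immediately at round $t$. Thus the only nontrivial case to analyze is the one where $j$ is a slave at round $t$.

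For the slave case I would reason as follows. Since $j$ is a slave, it executes \textbf{SetGradecast} on its value set $V^{t-1}_j$, which contains $v$ by the induction hypothesis for part (2); hence the correct leader $j$ gradecasts $v$. By the induction hypothesis for part (1), $v$ also lies in $S^{t-1}_i[j]$ for every correct $i$, i.e. $v$ sits in the safe value set that every correct process holds for $j$ at the beginning of round $t$. These are precisely the premises of property 1 of \textbf{SetGradecast}, so every correct process $i$ assigns $v$ the score $c^i_v = 2$. Consequently $v$ enters $U_2$ from the viewpoint of every correct process: each correct $i$ includes $v$ when it performs $S_i[j] := U_2$ (line 16), which gives part (1) at round $t$; and $j$ itself, also scoring $v$ with $2$, includes $v$ when it performs $V_j := U_2$ (line 18), which gives part (2) at round $t$.

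I expect the main obstacle to be exactly this mutual dependence, which is what forces the joint induction instead of an independent proof of (1) followed by (2): the fact that $j$ still gradecasts $v$ (part (2) at round $t-1$) is what triggers the score-$2$ grading, and the fact that $v$ is in all correct safe sets (part (1) at round $t-1$) is what lets property 1 of \textbf{SetGradecast} fire, and only the conjunction of the two keeps $v$ alive in both $S_i[j]$ and $V_j$ at round $t$. A secondary point I would be careful to justify rigorously is the structural claim in the second paragraph, namely that the updates applied to $j$ in the master case are genuinely non-shrinking, so that the reduction to the slave case is sound.
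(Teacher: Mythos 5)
Your proof is correct and follows essentially the same route as the paper's: a joint induction on the round number in which the master case is dismissed as non-shrinking and the slave case is handled by combining the two induction hypotheses with property 1 of \textbf{SetGradecast} to force score $2$ at every correct process, so that $v$ re-enters both $S_i[j]$ and $V_j$ via the $U_2$ updates. The one caveat you flagged --- that the master-side update of the value set is literally printed as an overwrite ($V_i := U_1$) rather than a union, which would undermine the ``non-shrinking'' reduction --- is an inconsistency already present in the paper itself, whose own proof makes exactly the same non-shrinking assertion you do.
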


The following lemma shows that the union of all safe value sets of correct processes for slave processes will always be a subset of the values of each master process. 
\begin{lemma} \label{lem:domination}
Let $G$ be a group which is divided into $S(G)$ and $M(G)$ at round $r$. Then $SF_{S(G)}^r \subseteq V_j^t$ for each correct $j \in M(G)$ and any round number $t \geq r$.   
\begin{proof}
Consider round $r$ of the algorithm. Consider an arbitrary value $v \in SF_{S(G)}^r$. The value $v$ must be gradecast by some process in $S(G)$ and assigned score 2 by some correct process by line 16 of the algorithm. That is, there exists a correct process $t$ with $c_v^t = 2$. Then, we have $c_v^k \geq 1$ for any correct process $k$ by property 2 of the {\bf SetGradecast} algorithm. Hence, $v \in S^r_k[j]$ for any correct process $k$ by line 17 of the algorithm. Also, we have $v \in V^r_j$ by line 19. By Lemma \ref{lem:correct_value}, we can derive that $v \in V_j^{t}$ for any round number $t \geq r$. Since $v$ is an arbitrary value from $SF_{S(G)}^r$, we must have $SF_{S(G)}^r \subseteq V_j^{t}$ for any round number $t \geq r$.
\end{proof}
\end{lemma}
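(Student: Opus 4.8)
The plan is to fix an arbitrary value $v \in SF_{S(G)}^r$ and show that it persists in $V_j^t$ for every correct master $j \in M(G)$ and all $t \geq r$. First I would unpack the definition of $SF_{S(G)}^r$: by construction $v$ lies in $S_i^r[\ell]$ for some correct process $i$ and some slave $\ell \in S(G)$, and since line 16 of the main algorithm sets the safe set of a slave to exactly the values gradecast by processes in $S(G)$ with score $2$, there must be a correct process that assigned $v$ a score of $2$ in the \textbf{SetGradecast} instance at round $r$.

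Next I would propagate this score across all correct processes. By property $2$ of \textbf{SetGradecast} (the $|c^i_v - c^j_v| \le 1$ guarantee, which forces every correct process to assign score $\geq 1$ whenever one correct process assigns $2$), we obtain $c_v^k \geq 1$ for every correct $k$. From this I would read off two consequences of the update rules: line 17 places every value scored at least $1$ into $S_k^r[j]$ for each master $j \in M(G)$, so $v \in S_k^r[j]$ for all correct $k$; and line 19 sets a master's value set to the score-$\geq 1$ values, so $v \in V_j^r$ for each correct master $j$.

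Finally, to extend from round $r$ to every $t \geq r$, I would invoke Lemma \ref{lem:correct_value}: its hypothesis is precisely that $v \in V_j^r$ and $v \in S_i^r[j]$ for every correct $i$, both just established, and its conclusion $v \in V_j^t$ for all $t \geq r$ is exactly what we need. Since $v$ was arbitrary, this yields $SF_{S(G)}^r \subseteq V_j^t$ for each correct $j \in M(G)$ and every $t \geq r$.

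The hard part is the temporal extension, i.e., ruling out that $v$ is silently dropped from $V_j$ at a later recursion level when $j$ is itself classified into a slave subgroup. This is exactly what Lemma \ref{lem:correct_value} absorbs, but it is worth noting \emph{why} its hypotheses survive: because $v$ sits in the safe set for $j$ of every correct process, whenever the correct process $j$ re-gradecasts $v$ it receives score $2$ from all correct processes (property $1$ of \textbf{SetGradecast}), so $v$ is reinserted into both $V_j$ and every correct safe set for $j$, closing the induction inside Lemma \ref{lem:correct_value} rather than here. Thus the real content of this lemma is confined to establishing the two hypotheses at round $r$; the persistence over future rounds is a clean appeal to the earlier lemma.
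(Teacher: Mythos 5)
Your proposal is correct and follows essentially the same route as the paper's own proof: extract a correct process assigning score 2 from the definition of $SF_{S(G)}^r$ via line 16, lift to score $\geq 1$ at all correct processes by property 2 of \textbf{SetGradecast}, read off $v \in S_k^r[j]$ and $v \in V_j^r$ from lines 17 and 19, and close with Lemma \ref{lem:correct_value}. Your added remark correctly identifies that the persistence argument lives entirely inside Lemma \ref{lem:correct_value}, which is exactly how the paper structures it.
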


\iffalse
\begin{corollary} \label{lem:coro_subgroup_contains}
For any groups $G$ and $G'$ such that $G \subseteq G'$, we have $SF_G^t \subseteq SF_{G'}^r$ for any $t \geq r$. 
\end{corollary}
\fi

\begin{lemma} \label{lem:comp}
(Comparability) For any two correct process $i$ and $j$, we have either $y_i \leq y_j$ or $y_j \leq y_i$. 
\begin{proof}
We show either $V_i^{\log n} \subseteq V_j^{\log n}$ or $V_j^{\log n} \subseteq V_i^{\log n}$. Let $G$ denote the last group both $i$ and $j$ belong to. Let $r$ be the round when $G$ is divided into $S(G)$ and $M(G)$. Without loss of generality, suppose $i \in S(G)$ and $j \in M(G)$. By repeatedly applying $(p1)$, $(p2)$ and $(p3)$ of Lemma \ref{lem:cls_logn}, we have $V_i^{\log n} \subseteq SF_{S(G)}^r$. By Lemma \ref{lem:domination}, we know that $SF_{S(G)}^r \subseteq V_j^{\log n}$. Therefore, $V_i^{\log n} \subseteq V_j^{\log n}$. 
\end{proof}
\end{lemma}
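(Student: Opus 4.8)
The plan is to reduce comparability of the outputs to a subset relation between the final value sets and then invoke monotonicity of the join. Since $y_i = \sqcup\{v \in V_i^{\log n}\}$ and $y_j = \sqcup\{v \in V_j^{\log n}\}$, it suffices to prove that either $V_i^{\log n} \subseteq V_j^{\log n}$ or $V_j^{\log n} \subseteq V_i^{\log n}$; whichever inclusion holds, the corresponding inequality on the joins follows because $\sqcup$ is monotone with respect to set inclusion.

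First I would locate the point at which $i$ and $j$ are separated. Because all processes start in the single group $G_1$ and each round halves the current group by ids into disjoint subgroups, after $\log n$ rounds every group is a singleton. Hence there is a unique last common group $G$ together with the round $r$ at which $G$ is divided into $S(G)$ and $M(G)$, and exactly one of $i,j$ lands in $S(G)$ and the other in $M(G)$. Assume without loss of generality $i \in S(G)$ and $j \in M(G)$. The goal then specializes to $V_i^{\log n} \subseteq V_j^{\log n}$.

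The core of the argument is to show $V_i^{\log n} \subseteq SF_{S(G)}^r$, which I would establish by climbing the recursion tree along the chain of groups that contain $i$ after round $r$. Let $H_r = G$ and, for each round $t > r$, let $H_t$ denote the subgroup divided at round $t$ that contains $i$, so that $H_{r+1} = S(G)$ and $H_{\log n} \subseteq \cdots \subseteq H_{r+1}$. Applying $(p3)$ of Lemma \ref{lem:cls_logn} at the final round gives $V_i^{\log n} \subseteq SF_{H_{\log n}}^{\log n - 1}$, and applying $(p1)$ or $(p2)$ of the same lemma at each intermediate round telescopes the safe-value-set inclusions $SF_{H_{t+1}}^{t} \subseteq SF_{H_t}^{t-1}$ back down to $SF_{H_{r+1}}^{r} = SF_{S(G)}^r$. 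Chaining these containments yields $V_i^{\log n} \subseteq SF_{S(G)}^r$. Finally, Lemma \ref{lem:domination} gives $SF_{S(G)}^r \subseteq V_j^{t}$ for every correct $j \in M(G)$ and every $t \geq r$; taking $t = \log n$ produces $SF_{S(G)}^r \subseteq V_j^{\log n}$, and combining the two inclusions gives $V_i^{\log n} \subseteq V_j^{\log n}$, hence $y_i \leq y_j$.

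I expect the main obstacle to be the bookkeeping in the telescoping step: one must track carefully that the superscript on a child group is the round at which its parent is split, while the superscript on the parent is one less, so that the inclusions furnished by $(p1)$ and $(p2)$ actually compose along the chain $H_{\log n} \subseteq \cdots \subseteq H_{r+1}$ down to round $r$. The remaining pieces --- identifying the separating round, fixing the slave/master assignment, and the final appeal to monotonicity of $\sqcup$ --- are routine.
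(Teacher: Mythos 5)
Your proposal is correct and follows essentially the same route as the paper's proof: identify the last common group and the splitting round, assume $i \in S(G)$ and $j \in M(G)$, telescope the inclusions from Lemma \ref{lem:cls_logn} to get $V_i^{\log n} \subseteq SF_{S(G)}^r$, and conclude via Lemma \ref{lem:domination}. Your explicit tracking of the chain of groups $H_t$ just spells out the bookkeeping that the paper compresses into ``by repeatedly applying $(p1)$, $(p2)$ and $(p3)$.''
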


\logTheorem* 

\begin{proof}
{\bf Comparability} follows from Lemma \ref{lem:comp}. 

{\bf Downward-Validity.} After the end of the initial round, the input value $x_i$ of each correct process $i$ must be contained in its value set $V_i$ and in the safe value set of each process $j \in C$ for $i$, by property 1 of {\bf SetGradecast}. Then, by Lemma \ref{lem:correct_value}, the input value of $i$ remains in $V_i$. Thus, $x_i \leq y_i$. 

{\bf Upward-Validity.} After the initial round, since only values with score at least 1 are included into the safe value set, by property 2 of the {\bf Gradecast} algorithm, each Byzantine process can only introduce at most one value into the initial safe value set. Then, we have $SF_{G_1}^0 \subseteq \cup \{x_i ~ | ~ i \in C\} \cup B$, where $B \subset X $ and $|B| \leq t$. For each correct process $i$, by repeatedly applying $(p1)$, $(p2)$ and $(p3)$ of Lemma \ref{lem:cls_logn}, we have $V_i^{\log n} \subseteq SF_{G_1}^0$. Thus, $\sqcup \{y_i ~|~ i \in C\} \leq \sqcup(\{x_i ~ | ~ i \in C\} \cup B)$, where $B \subset X $ and $|B| \leq t$.
\end{proof}

\begin{corollary}
There is a $4\log n + 4$ rounds algorithm for the authenticated BLA problem in synchronous systems which can tolerate $f < \frac{n}{2}$ Byzantine failures, where $n$ is the number of processes in the system. The algorithm takes $O(n^2 \log n)$ messages. 
\end{corollary}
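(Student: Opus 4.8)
The plan is to mirror the authenticated corollary already established for the $O(\sqrt{f})$ algorithm: replace every unauthenticated Gradecast building block with its authenticated counterpart and argue that the surrounding algorithm and its correctness proof are untouched. First I would invoke the $4$-round authenticated Gradecast of Katz et al~\cite{katz2006expected}, which guarantees the three Gradecast properties whenever $n \geq 2f + 1$. Since the $3\log n + 3$ rounds algorithm (Fig.~\ref{fig:logn_algo}) uses Gradecast only in the initial round (lines 1--5) and SetGradecast in the $\log n$ main iterations, I also need an authenticated analogue of SetGradecast.

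Next I would build the authenticated SetGradecast from the authenticated single-value Gradecast in the same way the unauthenticated SetGradecast was built from plain Gradecast: each process filters received values through its safe array $S_i$, echoes only the valid ones, and grades a value $2$, $1$, or $0$ according to its multiplicity. The key point is that the three properties of SetGradecast must be re-established under the weaker resilience $n \geq 2f + 1$. The counting arguments used in the unauthenticated proof (e.g.\ ``$c^j_v = 2$ implies $v$ arrived from $n - 2f \geq f + 1$ correct processes'') no longer go through, since $n \geq 2f + 1$ only yields $n - 2f \geq 1$. This is precisely where authentication does the work: signed forwarding lets a correct process distinguish a value genuinely supported by a correct sender from one fabricated by the Byzantine minority, recovering properties 1--3 exactly as the authenticated single-value Gradecast recovers the Gradecast properties. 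I expect this re-derivation of the SetGradecast properties at resilience $n \geq 2f + 1$ to be the main obstacle.

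Once the authenticated SetGradecast is in hand, the rest is immediate. Every correctness lemma for the main algorithm --- Lemma~\ref{lem:cls_logn}, Lemma~\ref{lem:correct_value}, Lemma~\ref{lem:domination}, and Lemma~\ref{lem:comp}, together with the Downward- and Upward-Validity arguments in the proof of Theorem~\ref{log_theorem} --- refers only to the three abstract properties of (Set)Gradecast and never to the concrete round structure or to the bound $n \geq 3f + 1$. Hence all of them carry over verbatim, now under $f < n/2$. For the round complexity, the initial authenticated Gradecast costs $4$ sub-rounds and each of the $\log n$ iterations costs $4$ sub-rounds for the authenticated SetGradecast, giving $4 + 4\log n = 4\log n + 4$ rounds. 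Finally, authentication does not change the asymptotic message count: as in the unauthenticated case, the $O(n)$ SetGradecast instances per iteration can be combined into $O(n^2)$ messages per round, so over $\log n$ rounds the total remains $O(n^2 \log n)$.
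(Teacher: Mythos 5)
Your proposal follows essentially the same route as the paper: the paper states this corollary without an explicit proof, but its proof of the analogous authenticated corollary for the $O(\sqrt{f})$ algorithm is exactly your substitution argument --- replace the 3-round building block by the 4-round authenticated Gradecast of Katz et al.~\cite{katz2006expected}, observe that the correctness lemmas depend only on the abstract (Set)Gradecast properties, and count $4 + 4\log n$ rounds and $O(n^2\log n)$ messages. You are in fact more careful than the paper in flagging that the \textbf{SetGradecast} grading thresholds ($n-f$ and $f+1$ occurrences) rely on $n \geq 3f+1$ and must be re-derived for the authenticated primitive at $n \geq 2f+1$; the paper never addresses this, so your honest acknowledgement that this step remains to be carried out is the only substantive gap, and it is one the paper shares.
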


\newpage 

\section{$O(\log f)$ Algorithm for the BLA Problem}

In this section, we present an algorithm for the BLA problem which takes $4 \log f + 3$ synchronous rounds. The $O(\log f)$ algorithm in \cite{zheng2018lattice} for the crash failure model applies a classifier procedure to divide a group of processes into two subgroups: the master group and the slave group and guarantees the following two properties. 1) The value of any slave process is at most the value of any master process. 2) The union of all slave values is bounded by a threshold parameter $k$, where $k$ serves as a knowledge threshold. With the above properties, the classifier procedure can be recursively applied within each subgroup and all processes have comparable values after $O(\log f)$ rounds by setting the knowledge threshold $k$ in a binary way as follows. Initially, all processes are in the same group with initial knowledge threshold $n - \frac{f}{2}$. Consider a group $G$ at level $r$ with knowledge threshold $k$, then the slave group of $G$ has knowledge threshold $k - \frac{f}{2^{r + 1}}$ and the slave group of $G$ has knowledge threshold $k + \frac{f}{2^{r + 1}}$. If all processes exchange their values before recursively invoking the classifier procedure, we can see that each correct process must have at least $n - f$ values and have at most $n$ values. Then, after $\log f$ levels of recursion, by property 1 and 2, all processes in different groups must have comparable values and all processes in the same group must have the same value. 

The classifier procedure for the crash failure model in~\cite{zheng2018lattice} is quite simple. All processes within the same group exchange their current values. If a process obtains a value set with size greater than the threshold $k$, it is classified as a master. Otherwise, it is classified as a slave. A slave process keeps its value unchanged. A master process updates its value to be the union of all values obtained. Property 1 is straightforward. For property 2, since each slave process must have received all the values of slave processes, the union of all slave values must have size at most $k$. 

In presence of Byzantine processes, however, guaranteeing property 2 is quite challenging since some of the slave processes may be Byzantine. Similar to the $O(\log n)$ algorithm, we let each process store a safe value set for each group which bounds the values that processes in this group can send at each round. With these safe value sets, we strengthen property 2 as follows: 2) The union of the safe value sets of all correct processes for the slave group has size at most $k$.

 %Then, a Byzantine process can lie about their group identity. 
 In the $O(\log n)$ algorithm, we divide a group into slave subgroup and master subgroup based on process ids. A Byzantine process cannot lie about its group identity, i.e., whether it is in the slave group or the master group. In the $O(\log f)$ algorithm, the classification is based on the values received at each round. So, process $i$ does not know whether process $j$ is classified as a slave or a master at any given round. Thus, a Byzantine process can lie about its group identity and the $O(\log f)$ algorithm needs a mechanism to prevent such lies. 
 In the $O(\log f)$ algorithm, each process $i \in [n]$ has a label $l_i$, which serves as the threshold when it performs the classification step and also indicates its group identity. The notion of group defined below is based on labels of processes. 
 
 \begin{definition}[group]
A {\em group} is a set of processes which have the same label. The {\em label of a group} is the label of the processes in this group. The label of a group is also the threshold value processes in this group use to do classification. 
\end{definition}

Initially all processes are within the same group with label $k_0 = n - \frac{f}{2}$. We also use label to indicate a group. We say a process is in group $k$ if its value is associated with label $k$.

Consider the classification step for group $G$ with label $k$. There are two main differences between $O(\log n)$ and
$O(\log f)$ algorithm. First, in $O(\log n)$ algorithm,  we let each process in the slave group of $G$ invoke the gradecast primitive to send its current value set. In $O(\log f)$ algorithm, all processes in group $G$ invoke the gradecast primitive. 
Second, in the $O(\log n)$ algorithm, the classification is based on process ids.
In the $O(\log f)$ algorithm, we let each process send its safe value set for group $G$ to all processes in $G$ and each process in $G$ perform classification based on the safe sets received. If the union of the received safe sets has size greater than $k$, the process is classified as a master, otherwise as a slave.
Each slave process updates its value to be the set of values with score 2. Each master process updates its value to be the union of its current value and the set of values with score at least 1. The safe value set is updated in a similar manner to the $O(\log n)$ algorithm. 

\begin{figure}[htp] 
    \centering
    \fbox{\begin{minipage}[t]  {6in}
$x_i$: input value for process $i$\\ 
$V_i$: value set of process $i$. $V_i := \{x_i\}$ initially. \\
Initially all process are within the same group with label $k_0 = n - \frac{f}{2}$ \\
$l_i$: label of $p_i$. $l_i := k_0 = n - \frac{f}{2}$ initially and is updated at each round \\
Map $F_i$, with $F_i[k]$ being the safe value set for processes with label $k$. \\

/* Build the Initial Safe Map and Value Set */ \\
{\bf 1: } {\bf for} each process $i$, {\bf in parallel do} \\
{\bf 2: } \h Execute {\bf Gradecast}($i$, $x_i$) \\
{\bf 3: } \h Let $<j, v_j, c_j>$ denote that process $j$ gradecasts $v_j$ with score $c_j$ \\
{\bf 4: } \h Set $F_i[k_0] := \{v_j ~ | ~ c_j \geq 1 \wedge j \in [n]\}$ // safe set for the initial group with label $k_0 = n - \frac{f}{2}$\\ 
{\bf 5: } \h Set $V_i := \{v_j ~ | ~ c_j = 2 \wedge j \in [n]\}$\\
{\bf 6: } {\bf endfor} \\

{\bf 7: } {\bf for} $r := 1$ to $\log f$ \\
{\bf 8: } \h  {\bf for} each process $i$, {\bf in parallel do} \\
{\bf 9: } \h \h Execute {\bf SetGradecast}$(i, V_i, l_i)$ \\ 
{\bf 10:} \h\h Let $L$ denote the set of labels received in all Gradecasts\\ 
{\bf 11:} \h\h {\bf for} each label $k \in L$ /* Each label represents a group */  \\
{\bf 12:} \h\h\h Let $U_{i,1}^k$ denote the set of values gradecast with label $k$ and assigned score $\geq$ 1 by $p_i$. \\
{\bf 13:} \h\h\h Let $U_{i,2}^k$ denote the set of values gradecast with label $k$ and assigned score 2 by $p_i$\\
{\bf 14:} \h\h\h Set $F_i[m(k,r)] := F_i[k] \cup U_{i,1}^k$ and $F_i[s(k,r)] := U_{i,2}^k$\\
{\bf 15:} \h\h\h Send $U_{i,2}^k$ to processes who gradecast with label $k$. \\
{\bf 16:} \h\h\h {\bf if} $l_i = k$ //if my label is $k$\\
{\bf 17:} \h\h\h\h Let $R_j$ denote the set of values received from $p_j$ at line 15\\
{\bf 18:} \h\h\h\h Set $T := \cup \{R_j ~|~ R_j \subseteq U_{i,1}^k, j \in [n]\}$ \\
{\bf 19:} \h\h\h\h /* Classification */ \\
{\bf 20:} \h\h\h\h {\bf if} {$|T| > k$} {\bf then} $V := U_{i,1}^k$, $l_i := l_i + \frac{f}{2^{r + 1}}$ //master process\\
{\bf 21:} \h\h\h\h\h\h\h\h {\bf else} $V := U_{i,2}^k$, $l_i := l_i - \frac{f}{2^{r + 1}}$ //slave process\\
{\bf 22:} \h\h\h {\bf endif} \\
{\bf 23:} \h\h{\bf endfor} \\
{\bf 24:} \h{\bf endfor} \\
{\bf 25:} {\bf endfor} \\
{\bf 26:}  $y_i := \sqcup \{v \in V_i^{\log f + 1}\}$
\end{minipage} 
}
\caption{The $4\log f + 3$ Rounds Algorithm for the BLA Problem}
\label{fig:logf_algo}
\end{figure}

 In the $O(\log f)$ algorithm, each process $i \in [n]$ keeps track of a safe value map $F_i$ with $F_i[k]$ being the safe value set of process $i$ for group $k$, i.e., $F_i[k]$ is an upper bound on the values with label $k$ that process $i$ considers valid. Define $s(k,r) = k - \frac{f}{2^{r + 1}}$ and $m(k, r) = k + \frac{f}{2^{r + 1}}$. The algorithm is shown in Fig. \ref{fig:logf_algo}. 
 
 In lines 1-6 of the algorithm, each process first invokes {\bf Gradecast} to send its input value to all. Then, it constructs its value set as the set of values gradecast with score 2 and its initial safe value set for the initial group as the set of values gradecast with score at least 1. Lines 1-6 serve two purposes: 1) To construct the safe value set for the initial group with label $k_0 = n - \frac{f}{2}$ and ensure that each Byzantine process can introduce at most one value in the safe value sets 2) Ensure that there are at most $f$ values unknown to each correct processes. Then,  $\log f$ recursion levels suffice for all processes to obtain comparable values. 

In lines 7-25, at each round, each process invokes {\bf SetGradecast} to send its current value to all and performs classification. When a process invokes the {\bf SetGradecast} to send its current value set, its current label is attached. After the gradecast step, for each process $j \in [n]$, process $i$ obtains a value-score-label triple from the gradecast of process $j$. Then, process $i$ executes line 11-15 for each group at the current round. Specifically, for the group with label $k$, process $i$ obtains the set of values with score at least 1, $U_{i,1}^k$, and the set of values with score 2, $U_{i,2}^k$,  from the gradecast of processes with label $k$. Then, process $i$ updates its safe value set for group $m(k,r)$, i.e., the master group of group $k$, to be the union of its safe value set for group $k$ and $U_{i,1}^k$. It also updates its safe value set for group $s(k,r)$, i.e., the slave group, to be $U_{i,2}^k$. Due to property 2 of gradecast, we must have $U_{i,2}^k \subseteq U_{j, 1}^k$ for any process $i$ and $j$. This step is to ensure that if a master process $j$ obtains a value in $U_{i,2}^k$, this value will be in the safe value set of each correct process for $j$. Then, when the master process tries to gradecast this value, it must be assigned score of 2 by each correct process, due to property 3 of gradecast. At line 15, process $i$ sends its safe value set for the slave group to the processes in group $k$.

Lines 16-22 are only executed by processes in group $k$. They obtain the set of values sent by all processes at line 15 and do classification based the size of this set. To prevent a Byzantine process from sending arbitrary values at line 15, each process in group $k$ only accepts the set $R_j$ from process $j$ if $R_j$ is a subset of $U_{i,1}^k$. If process $j$ is correct, this condition must hold by property 2 of gradecast. So the sets sent from correct processes will always be accepted. Then, the set $T$ at line 18 must contain all the sets sent from correct processes. Since each such set is the safe value set of a process for the slave group, each process in group $k$ is actually doing classification based on the safe sets of processes for the slave group. Lines 20-21 is the classification step. If the size of $T$ is greater than $k$, then the process is classified as a master and updates its value to be the union of its current value and the set of values gradecast by processes in group $k$ with score at least 1. Otherwise, its value is updated to be the set of values gradecast by processes in group $k$ with score 2. Its label is updated based on whether the process is master or slave. 

\begin{table*}[h] 
\caption{Notations}
\begin{tabular}{ |c|c|} 
 \hline
 \textbf{Variable} & \textbf{Definition} \\ \hline
 $V_i^r$ &The value set held by process $i$ at the beginning of round $r$ \\ \hline 
$F_i^r$ & \makecell{The safe value map of $i$ at the beginning of round $r$ \\
$F_i^r[k]$ is $i$'s safe value set for group $k$} \\ \hline
$S_i^r$ & \makecell{Auxiliary array of size $n$ derived from $F_i^r$. \\ $S_i^r[j]$ is process $i$'s safe value set for process $j$ at the beginning of round $r$ \\ i.e., $S_i^r[j] = F_i^r[k_j]$ where $k_j$ is the label of process $j$ at round $r$} \\ \hline 
$SF_{k}^r$ & \makecell{Auxiliary variable denoting the union of the safe value set of each correct process \\ for group $k$  at the beginning of round $r$, i.e., $SF_k^r := \{F_i^r[k] ~| ~ i \in C\}$}\\ \hline
$T_i^r$ & The value of variable $T$ in process $i$ at line 18 of round $r$ \\ \hline 
\end{tabular}
\label{tab:variable_logf}
\end{table*}

For any group $G$ with label $k$, let $S(G)$ and $M(G)$ denote the slave group and the master group. Let $s(k, r)$ denote the label of $S(G)$ and $m(k, r)$ denote the label of $M(G)$. The variables we use for the proof is given in Table \ref{tab:variable_logf}. The classifier procedure has the following properties. 

\begin{lemma} \label{lem:cls_logf}
Let $G$ be a group at round $r$ with label $k$ and $G \cap C \not = \emptyset$. Let $L$ and $R$ be two nonnegative integers such that $L \leq k \leq R$. If $L < |V_i^{r}| \leq R$ for each correct process $i \in G$, and $|SF_{k}^r| \leq R$, then \\
(p1) For each correct process $i \in M(G)$, $k < |V_i^{r + 1}| \leq R$\\
(p2) For each correct process $i \in S(G)$, $L < |V_i^{r + 1}| \leq k$\\
(p3) $SF_{m(k,r)}^{r + 1} \subseteq SF_{k}^r$ \\
(p4) $SF_{s(k,r)}^{r + 1} \subseteq SF_{k}^r$ \\
(p5) $|SF_{m(k,r)}^{r + 1}| \leq R$ \\ 
(p6) $|SF_{s(k,r)}^{r + 1}| \leq k$ if $S(G) \cap C \not = \emptyset$\\
(p7) For each correct process $j \in M(G)$, $SF_{s(k,r)}^{r + 1} \subseteq V_j^{r + 1}$\\
(p8) For each correct process $i \in G$, $V_i^{r + 1} \subseteq SF_k^r$ 
\begin{proof}
{\bf (p1)-(p2)}: Immediate from the {\em Classifier} procedure. 

{\bf (p3)}: Consider line 14 of round $r$. Each correct process $p$ updates its safe value set for group $m(k, r)$ as $S_p[m(k, r)] := S_p[k] \cup U_{p,1}^k$, with $U_{p,1}^k$ being the values gradecast by processes in group $k$ with score at least 1. Thus, any value $v \in U_{p,1}^k$ must be in the safe set for group $k$ of at least one correct process by property 3 of the {\bf SetGradecast} procedure. Then, we have $U_{p,1}^k \subseteq SF_k^r$. Hence, $SF_{m(k, r)}^{r + 1} \subseteq SF_{k}^r$ 

{\bf (p4)}: Consider an arbitrary value $v \in SF_{s(k,r)}^{r + 1}$. From line 14 of round $r$, we know that $v$ must be gradecast by some process in group $k$ with score 2 at round $r$, which implies that value $v$ must be in the safe value set for group $k$ of at least one correct process. Then, value $v$ must be in $SF_k^{r}$. Thus, $SF_{s(k,r)}^{r + 1} \subseteq SF_k^{r}$.

{\bf (p5)}: Immediately implied by $(p3)$ and the condition that $|SF_{k}^r| \leq R$. 

{\bf (p6)}: Consider an arbitrary correct processes $i \in S(G)$ at round $r$. We first show that $S_p[s(k,r)] \subseteq T_i^r$ for each correct process $p \in [n]$. At line 14, $S_p[s(k,r)]$ is updated to be $U_{p,2}^k$, the set of values gradecast by processes in group $k$ with score 2. At line 15, process $p$ send its $U_{p,2}^k$ to process $i$. At line 17, process $i$ must receive $R_p$ from process $p$ and $R_p$ must be subset of $U_{i,1}^k$ of process $i$ since $p$ is correct. Thus, $S_p[s(k,r)] \subseteq T_i^r$. Therefore, we have $SF_{s(k,r)}^{r + 1} \subseteq T_i^r$. Since $i \in S(G)$, we have $|T_i^r| \leq k$, which implies that $|SF_{s(k,r)}^{r + 1}| \leq k$. 

{\bf (p7)}: Consider an arbitrary value $v \in SF_{s(k,r)}^{r + 1}$, from line 14 of the algorithm, $v$ must be gradecast by some process in $G$ at round $r$ and assigned score of 2 by at least one correct process. By property 2 of the {\bf SetGradecast} algorithm, the score of $v$ assigned by any correct process must be at least 1. Specifically, the score of $v$ assigned by process $j$ must be at least 1, which means $v$ must be included into $V_j^r$ by process $j$ at line 20. 

{\bf (p8)}: $V_i^{r + 1}$ is either the set of values gradecast by processes in group $G$ with score 2 or the set of values with score at least 1. In either case, by property of gradecast, each value in $V_i^{r + 1}$ must be contained in $SF_k^{r}$. 
\end{proof}
\end{lemma}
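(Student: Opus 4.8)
The plan is to derive all eight properties from the three guarantees of \textbf{SetGradecast} together with the explicit update rules at lines~14--21, handling the containment statements (p3), (p4), (p7), (p8) and the size bounds (p5), (p6) first and leaving the classification bounds (p1), (p2) for last. Throughout, the single most useful fact is property~3 of \textbf{SetGradecast}: a value gradecast by a group-$k$ leader receives a positive score from a correct process only if it already lies in the safe set for group $k$ of some correct process, i.e. only if it belongs to $SF_k^r$.

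For (p3) I would note that each correct $i$ sets $F_i^{r+1}[m(k,r)] = F_i^r[k] \cup U_{i,1}^k$ at line~14; the first part is contained in $SF_k^r$ by definition, and every value of $U_{i,1}^k$ carries positive score and hence lies in $SF_k^r$ by property~3, so taking the union over correct $i$ gives $SF_{m(k,r)}^{r+1} \subseteq SF_k^r$. The same property-3 argument applied to $F_i^{r+1}[s(k,r)] = U_{i,2}^k$ yields (p4), and applied to $V_i^{r+1} \in \{U_{i,1}^k, U_{i,2}^k\}$ yields (p8); then (p5) is immediate by combining (p3) with the hypothesis $|SF_k^r| \le R$. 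For (p7) I would observe that any $v \in SF_{s(k,r)}^{r+1}$ is scored $2$ by some correct process, so by property~2 it is scored at least $1$ by every correct process, in particular by a correct master $j \in M(G)$, which therefore places $v$ into $U_{j,1}^k = V_j^{r+1}$.

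The more delicate bound is (p6). Here I would fix a correct slave $i \in S(G)$ (which exists since $S(G) \cap C \neq \emptyset$) and show $SF_{s(k,r)}^{r+1} \subseteq T_i^r$: for each correct $p$ the set $R_p$ that $i$ receives at line~15 equals $F_p^{r+1}[s(k,r)] = U_{p,2}^k$, and since score-$2$ values of $p$ are scored at least $1$ by $i$ (property~2) we get $R_p \subseteq U_{i,1}^k$, so $R_p$ survives the filter at line~18 and is absorbed into $T_i^r$. Because $i$ is a slave, $|T_i^r| \le k$, and the containment forces $|SF_{s(k,r)}^{r+1}| \le k$.

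Finally, for (p1)--(p2): the upper bound for masters follows from $V_i^{r+1} = U_{i,1}^k \subseteq SF_k^r$ (hence $\le R$), the master lower bound $> k$ follows from the test $|T_i^r| > k$ with $T_i^r \subseteq U_{i,1}^k$, and the slave upper bound $\le k$ follows from $V_i^{r+1} = U_{i,2}^k = R_i \subseteq T_i^r$ with $|T_i^r| \le k$. I expect the genuine obstacle to be the \emph{lower} bound $L < |V_i^{r+1}|$ for slaves: this cannot come from $|T_i^r| \le k$ alone, since that test bounds $U_{i,2}^k$ only from above. To obtain it I would invoke the invariant --- analogous to Lemma~\ref{lem:correct_value} --- that the value set of a correct process is always contained in the safe set every correct process keeps for it; granting it, each correct $p \in G$ has $V_p^r$ scored $2$ by every correct process (property~1), so $V_p^r \subseteq U_{i,2}^k$ and $|V_p^r| > L$ propagates to $|V_i^{r+1}|$. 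Pinning down exactly where this invariant is maintained across rounds, and verifying that the filter at line~18 neither discards a correct slave's safe set nor admits spurious Byzantine values, is the part I would treat most carefully.
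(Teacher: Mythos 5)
Your proposal is correct and follows essentially the same route as the paper: (p3), (p4), (p8) from property~3 of \textbf{SetGradecast}, (p7) from property~2, (p5) from (p3), and (p6) via the containment $SF_{s(k,r)}^{r+1} \subseteq T_i^r$ for a correct slave $i$. The only difference is that you spell out (p1)--(p2), which the paper dismisses as ``immediate,'' and you rightly flag that the slave lower bound rests on the invariant the paper only establishes separately (its Lemma on persistence of correct values in safe sets).
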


The following lemma guarantees that if a value in the value set of a correct process is contained in the safe value set of each correct process, then this value remains in the value set of the correct process. With the following lemma and property $(p4)$ and $(p7)$ of Lemma \ref{lem:cls_logf}, we have $SF_{s(k,r)}^{t} \subseteq V_j^t$ for any round $t \geq r$. 
\begin{lemma} \label{lem:correct_value_log_f}
Consider an arbitrary value $v \in V^r_j$ of correct process $j$ at round $r$. If $v \in S_i^r[j]$ for each correct process $i$, then we have \\
(1) $v \in S_i^t[j]$ for each correct process $i$ and any round $t \geq r$. \\
(2) $v \in V_j^t$ for any round $t \geq r$. 
\begin{proof}
 By induction on the round number. For the base case $t = r$, the claim holds. Consider round $t > r$ and assume that (1) and (2) are satisfied for any round before $t$. At round $t$, since value $v \in S_i^{t}$ for each correct process $i$ and $v \in V_j^t$ by induction hypothesis, then value $v$ must be gradecast by process $j$ and assigned score of 2 by each correct process $i$ at round $t$, by property 1 of {\bf SetGradecast}. Thus, $v$ must be included into $S_i^{t + 1}[j]$ by each correct $i$ at line 14 and $v$ must included into $V^{t + 1}_j$ at line 20 or 21 by process $j$.
\end{proof}
\end{lemma}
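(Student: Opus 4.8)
The plan is to establish both parts simultaneously by induction on the round number $t$, carrying the joint invariant that $v \in V_j^t$ and $v \in S_i^t[j]$ for every correct process $i$. This is the $O(\log f)$ counterpart of Lemma \ref{lem:correct_value}, and the engine driving it is property 1 of \textbf{SetGradecast}: a value held by a correct leader that already sits in the safe set of every correct process for that leader is guaranteed to be re-admitted with score $2$, which is exactly what is needed to regenerate the invariant one round later.

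For the base case $t = r$ there is nothing to do: part (1) is the hypothesis and part (2) is $v \in V_j^r$. For the inductive step I would assume the invariant at round $t$ and examine round $t$ itself. Since $j$ is correct it runs \textbf{SetGradecast}$(j, V_j^t, l_j)$ at line 9, and $v \in V_j^t$ means $v$ is one of the values $j$ gradecasts under its current label $k := l_j$. The induction hypothesis $v \in S_i^t[j] = F_i^t[k]$ for every correct $i$ is precisely the precondition of property 1 of \textbf{SetGradecast}, so every correct $i$ (including $j$ itself) assigns $v$ score $2$ in $j$'s instance; hence $v \in U_{i,1}^k$ and $v \in U_{i,2}^k$ for each correct $i$.

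What remains is to push ``$v$ has score $2$ everywhere'' back into the updated value set and safe map. If $j$ classifies as a master it moves to label $m(k,t)$ and sets $V_j^{t+1} = U_{j,1}^k$ (line 20); if it classifies as a slave it moves to $s(k,t)$ and sets $V_j^{t+1} = U_{j,2}^k$ (line 21). Because $v$ lies in both $U_{j,1}^k$ and $U_{j,2}^k$, part (2) holds in either branch, so I never have to resolve which way $j$ goes. The same remark settles part (1): at line 14 each correct $i$ sets $F_i^{t+1}[m(k,t)] = F_i^t[k] \cup U_{i,1}^k$ and $F_i^{t+1}[s(k,t)] = U_{i,2}^k$, and since $v \in U_{i,1}^k \cap U_{i,2}^k$ the value $v$ lands in $F_i^{t+1}$ at \emph{both} the master and the slave label; as $S_i^{t+1}[j]$ is by definition $F_i^{t+1}$ read at whichever of these two labels $j$ actually adopts, we get $v \in S_i^{t+1}[j]$ with no further work. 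The main obstacle is really just this last indexing step---keeping $S_i^{t+1}[j]$ pinned to $j$'s genuine new label rather than a guessed one---but the fact that $v$ survives at both candidate labels makes it a bookkeeping check rather than a substantive difficulty, and the whole induction is carried by property 1 of \textbf{SetGradecast}.
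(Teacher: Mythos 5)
Your proposal is correct and follows essentially the same route as the paper's own proof: induction on the round number with the joint invariant, using property 1 of \textbf{SetGradecast} to force score $2$ at every correct process and then reading the updates at lines 14 and 20--21. The only difference is that you spell out the label bookkeeping (that $v$ survives at both the master and slave labels, so $j$'s classification outcome is irrelevant), which the paper leaves implicit in its phrase ``at line 20 or 21.''
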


\iffalse 
\begin{figure} [htbp] 
\fbox{\begin{minipage}[t]  {5in}
\noindent
\underline{Code for process $i$:} \\
$x_i$: input value \\
$y_i$: output value \\
$l_i := n - \frac{f}{2}$: label of process $i$ \\
$V_i$: value set held by process $i$ during the algorithm\\ 

 /* Initial Round */ \\
{\bf 1:} {\bf for} each process $i$, {\bf in parallel do} \\
{\bf 2:} \h Process $i$ invokes {\bf Gradecast}($i$, $x_i$) \\
{\bf 3:} \h Let $<j, v_j, c_j>$ denote that process $j$ gradecasts $v_j$ with score $c_j$ \\
{\bf 4:} \h Set $S_i[k] := \{v_j ~ | ~ c_j \geq 1 \wedge j \in [n]\}$ for any process $k \in [n]$\\
{\bf 5:} {\bf endfor} \\

/* Round 1 to $\log f$ */ \\
$v_i^1 := \sqcup \{ u ~|~ u \in U\}$ \\
 {\bf for} {$r := 1$ to $\log f$}\\
 \h ($V_i^{r + 1}, class$) := $Classifier(V_i^{r}, l_i, r)$\\
 \h {\bf if} {$class = master$} \h {\bf then} $l_i := l_i + \frac{f}{2^{r + 1}}$\\
 \h {\bf else} \h $l_i := l_i - \frac{f}{2^{r + 1}}$\\
 {\bf end for}\\
 $y_i := \sqcup \{v \in V_i^{\log f + 1}\}$
\end{minipage}
}
\caption{$O(\log f)$ Rounds Algorithm for the BLA Problem} \label{fig:main_algorithm}
\end{figure}
\fi

\begin{lemma}\label{lem:dec}
Let $G$ be a group of processes at round $r$ with label $k$ and $G \cap C \not = \emptyset$. Then \\
(1) for each correct process $i \in G$, $k - \frac{f}{2^r} < |V_i^{r}| \leq k + \frac{f}{2^r}$ \\
(2) $|SF_k^r| \leq k + \frac{f}{2^r}$

\begin{proof}
By induction on $r$. Consider the base case with $r = 1$, $k = k_0 = n - \frac{f}{2}$.  After the initial round, each correct process $i$ must have at least $n - f$ values in its value set $V_i^1$. By the property of gradecast, each process (could be Byzantine) can only introduce one value into the value sets of all correct processes at the initial round. Thus, $|V_i^1| \leq n$. So, (1) is proved. For (2), similarly, each process (could be Byzantine) can only introduce one value into $SF_{k_0}^r$. Thus, $|SF_{k_0}^1| \leq n = k_0 + \frac{f}{2}$. 

For the induction step, assume the above lemma holds for all groups with at least one correct process at round $r - 1$. Consider an arbitrary group $G$ at round $r > 1$ with label $k$. Let $G'$ be the parent group of $G$ at round $r - 1$ with label $k'$. Consider the \textit{Classifier} procedure executed by all processes in $G'$ with label $k'$. By induction hypothesis, we have: 

(1) for each correct process $i \in G'$, $k' - \frac{f}{2^{r - 1}} < h(V_i^{r - 1}) \leq k' + \frac{f}{2^{r - 1}}$ 

(2) $|SF_{k'}^{r - 1}| \leq k' + \frac{f}{2^{r - 1}}$.

 Let $L = k' - \frac{f}{2^{r - 1}}$ and $R = k' + \frac{f}{2^{r - 1}}$, then (1) and (2) are exactly the conditions of Lemma \ref{lem:cls_logf}. Consider the following two cases: 

 Case 1: $G = M(G')$. Then $k = m(k', r - 1) = k' + \frac{f}{2^r}$. From ($p1$) and ($p5$) of Lemma \ref{lem:cls_logf}, we have:
 
 (1) for each correct process $i \in G$, $k - \frac{f}{2^r} < h(V_i^r) \leq k + \frac{f}{2^r}$ 
 
 (2) $|SF_{k}^r| \leq k + \frac{f}{2^r}$. 

 Case 2: $G = S(G')$. Then $k = s(k', r-1) = k' - \frac{f}{2^r}$. From ($p2$) and ($p6$) of Lemma \ref{lem:cls_logf}, we have the same equations. 
 \end{proof}
\end{lemma}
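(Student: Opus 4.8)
The plan is to prove both claims simultaneously by induction on the round number $r$, treating statements (1) and (2) together as a single inductive hypothesis. The engine of the induction is Lemma \ref{lem:cls_logf}, whose preconditions are almost verbatim the inductive hypothesis applied to the parent group; so the real content of the argument is the arithmetic bookkeeping showing that the window $(k - \frac{f}{2^r},\, k + \frac{f}{2^r}]$ claimed for a group halves in width each time a group is split.

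For the base case $r = 1$ with $k = k_0 = n - \frac{f}{2}$, I would read both bounds off the initial Gradecast round (lines 1--6). The lower bound $|V_i^1| \geq n - f = k_0 - \frac{f}{2}$ holds because the (tagged, hence distinct) values gradecast by the at least $n - f$ correct leaders are each graded with score $2$ by every correct process (property 1 of Gradecast) and therefore all lie in $V_i^1$. The upper bound $|V_i^1| \leq n = k_0 + \frac{f}{2}$ is immediate, since $i$ records one value per leader and there are $n$ leaders. For claim (2), $|SF_{k_0}^1| \leq n$ follows from property 2 of Gradecast: all correct processes that assign a positive score to a given (possibly Byzantine) leader agree on that leader's value, so each of the $n$ leaders contributes at most one value to the union $\bigcup_{i \in C} F_i^1[k_0]$.

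For the induction step I would fix a group $G$ at round $r$ with label $k$ and $G \cap C \neq \emptyset$, and let $G'$ be its parent at round $r-1$ with label $k'$. Setting $L := k' - \frac{f}{2^{r-1}}$ and $R := k' + \frac{f}{2^{r-1}}$, the inductive hypothesis applied to $G'$ yields exactly $L \leq k' \leq R$, the bound $L < |V_i^{r-1}| \leq R$ for every correct $i \in G'$, and $|SF_{k'}^{r-1}| \leq R$, which are precisely the hypotheses of Lemma \ref{lem:cls_logf} on $G'$. I would then split on whether $G$ is the master or the slave child. If $G = M(G')$ then $k = m(k',r-1) = k' + \frac{f}{2^r}$, so $k' = k - \frac{f}{2^r}$ and $R = k' + \frac{f}{2^{r-1}} = k + \frac{f}{2^r}$; properties $(p1)$ and $(p5)$ then read as $k - \frac{f}{2^r} < |V_i^r| \leq k + \frac{f}{2^r}$ and $|SF_k^r| \leq k + \frac{f}{2^r}$, as required. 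If $G = S(G')$ then $k = s(k',r-1) = k' - \frac{f}{2^r}$, so $k' = k + \frac{f}{2^r}$ and $L = k - \frac{f}{2^r}$; since $G \cap C \neq \emptyset$ forces $S(G') \cap C \neq \emptyset$, properties $(p2)$ and $(p6)$ apply and give the identical pair of inequalities after substituting $k' = k + \frac{f}{2^r}$.

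The step I expect to be the main obstacle is not any single deduction but keeping the exponent bookkeeping exactly right: the generic parent window $[L,R]$ has width $\frac{f}{2^{r-2}}$, and one must verify that plugging in the definitions $m(k',r-1) = k' + \frac{f}{2^r}$ and $s(k',r-1) = k' - \frac{f}{2^r}$ collapses it precisely to the half-width window $(k - \frac{f}{2^r},\, k + \frac{f}{2^r}]$ claimed for the child, with the asymmetry between the strict lower bound and the non-strict upper bound preserved across both cases. The only side condition to watch is that the $(p6)$ bound requires a correct slave process, which is exactly what the hypothesis $G \cap C \neq \emptyset$ supplies in the slave case.
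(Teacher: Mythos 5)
Your proposal is correct and follows essentially the same route as the paper's own proof: induction on $r$ with the base case read off the initial Gradecast round and the inductive step obtained by instantiating $L = k' - \frac{f}{2^{r-1}}$, $R = k' + \frac{f}{2^{r-1}}$ in Lemma \ref{lem:cls_logf} and splitting on master versus slave child. Your version is slightly more explicit about why each leader contributes at most one value to $V_i^1$ and $SF_{k_0}^1$ (property 2 of Gradecast plus tagging) and about the side condition for $(p6)$, but the substance is identical.
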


\begin{lemma}\label{lem:same_group}
Let $i$ and $j$ be two correct processes that are within the same group $G$ at the beginning of round  $\log f + 1$. Then $V_i^{\log f + 1} = V_j^{\log f + 1}$ .
\begin{proof}
Let $G'$ be the parent of $G$ with label $k'$. Assume without loss of generality that $G = M(G')$. The proof for the case $G = S(G')$ follows in the same manner. Since $G'$ is a group at round $\log f$, by Lemma \ref{lem:dec}, we have: \\
 (1) for each correct process $p \in G'$, $k' - 1 < |V_p^{\log f}| \leq k' + 1$, and\\
 (2) $|SF_{k'}^{\log f}| \leq k' + 1$

 Since $i \in G'$ and $j \in G'$, (1) holds for both process $i$ and $j$. By the assumption that $G = M(G')$, at round $\log f$,  process $i$ and $j$ execute the \textit{Classifier} procedure with label $k'$ in group $G'$ and are classified as \textit{master}. Let $L = k' - 1$ and $R = k' + 1$, then by applying Lemma \ref{lem:cls_logf}($p1$) we have $k' < |V_i^{\log f + 1}| \leq k' + 1$ and $k' < |V_j^{\log f + 1}| \leq k' + 1$, thus $|V_i^{\log f + 1}| = |V_j^{\log f + 1}| = k' + 1$. Similarly, by ($p8$) of Lemma \ref{lem:cls_logf}, we have $V_i^{\log f + 1} \subseteq SF_{k'}^{\log f}$ nd $V_j^{\log f + 1} \subseteq SF_{k'}^{\log f}$. Hence, $|V_i^{\log f + 1} \cup V_j^{\log f + 1}| \leq k' + 1$. Thus, $V_i^{\log f + 1} = V_j^{\log f + 1}$. 
\end{proof}
\end{lemma}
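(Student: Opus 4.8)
The plan is to reduce the equality of value sets to a short counting argument, combining the size bounds of Lemma~\ref{lem:dec} with the classifier guarantees of Lemma~\ref{lem:cls_logf}. Since $i$ and $j$ lie in the same group $G$ at the beginning of round $\log f + 1$, they carry the same label; as each label corresponds to a unique $\pm$ path in the recursion, $G$ has a well-defined parent group $G'$ at round $\log f$, both $i$ and $j$ belong to $G'$, and both were classified in the same way (both master or both slave). I would assume without loss of generality $G = M(G')$, noting that the slave case is symmetric. The reason round $\log f$ is the right place to stop is that $\frac{f}{2^{\log f}} = 1$, so by this round the size window has contracted to a single unit.

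First I would instantiate Lemma~\ref{lem:dec} at round $\log f$ on the group $G'$ with label $k'$. This gives $k' - 1 < |V_p^{\log f}| \le k' + 1$ for every correct $p \in G'$, and $|SF_{k'}^{\log f}| \le k' + 1$. Applied to both $i$ and $j$, these are precisely the hypotheses of Lemma~\ref{lem:cls_logf} with $L = k' - 1$, $R = k' + 1$, and threshold $k'$.

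Next I would extract two facts from Lemma~\ref{lem:cls_logf}. Since $i, j \in M(G')$, property $(p1)$ gives $k' < |V_i^{\log f + 1}| \le k' + 1$ and the same for $j$; because $k'$ is an integer this forces $|V_i^{\log f + 1}| = |V_j^{\log f + 1}| = k' + 1$. (In the slave case property $(p2)$ instead pins both sizes to $k'$.) Property $(p8)$ gives $V_i^{\log f + 1} \subseteq SF_{k'}^{\log f}$ and $V_j^{\log f + 1} \subseteq SF_{k'}^{\log f}$, so $|V_i^{\log f + 1} \cup V_j^{\log f + 1}| \le |SF_{k'}^{\log f}| \le k' + 1$. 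Now the conclusion is immediate: two finite sets of size $k' + 1$ whose union has size at most $k' + 1$ must each equal that union, hence $V_i^{\log f + 1} = V_j^{\log f + 1}$.

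I expect the only genuinely delicate point to be the size-pinning step. It quietly relies on $k'$ being an integer so that the half-open interval $(k', k' + 1]$ contains exactly one integer; this is what converts the classifier's ``strictly greater than $k'$, at most $k' + 1$'' into an exact cardinality. This integrality rests on $f$ being a power of two and on the label arithmetic remaining integral all the way down to round $\log f$, so I would want to state that assumption explicitly. Once the two cardinalities are pinned, the rest is a direct assembly of Lemmas~\ref{lem:dec} and~\ref{lem:cls_logf} with no further estimation.
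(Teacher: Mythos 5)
Your proposal is correct and follows essentially the same route as the paper's own proof: instantiate Lemma~\ref{lem:dec} at round $\log f$ on the parent group $G'$, feed the resulting bounds into Lemma~\ref{lem:cls_logf} with $L = k'-1$, $R = k'+1$, use $(p1)$ to pin both cardinalities to $k'+1$ and $(p8)$ to bound the union by $|SF_{k'}^{\log f}| \le k'+1$, and conclude set equality. Your explicit remark that the size-pinning step depends on $k'$ being an integer (i.e., on $f$ being a power of two so the label arithmetic stays integral) is a point the paper leaves implicit, but it does not change the argument.
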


\begin{lemma} \label{lem:comp_log_f}
(Comparability) For any two correct $i$ and $j$, we have either $y_i \leq y_j$ or $y_j \leq y_i$. 
\begin{proof}
We show either $V_i^{\log f + 1} \subseteq V_j^{\log f + 1}$ or $V_j^{\log f + 1} \subseteq V_i^{\log f + 1}$. Let $G$ denote the last group with label $k$ that both $i$ and $j$ belongs to. Let $r$ be the last round that both $i$ and $j$ belong to group $G$. Without loss of generality, suppose $i \in S(G)$ and $j \in M(G)$. We have $V_i^{\log f + 1} \subseteq SF_i^{\log f} \subseteq SF_{S(k)}^r$. Consider round $r$ of the algorithm. Consider an arbitrary value $v \in SF_{S(G)}^r$. The value $v$ must be gradecast by some process in $S(G)$ and assigned score 2 by some correct process by line 3 of the classifier. That is, there exists a correct process $t$ with $c_v^t = 2$. Then, we have $c_v^k \geq 1$ for any correct process $k$ by property 2 of the {\bf SetGradecast} algorithm. Hence, $v \in S^r_k[j]$ for any correct process $k$ by line 22 of the algorithm. Also, we have $v \in V^r_j$ by line 24. By Lemma \ref{lem:correct_value_log_f}, we can derive that $v \in V_j^{\log f + 1}$. Since $v$ is an arbitrary value from $SF_{S(G)}^r$, we must have $SF_{S(G)}^r \subseteq V_j^{\log f + 1}$. Thus, $V_i^{\log f + 1} \subseteq V_j^{\log f+ 1}$. The case when $j \in S(G)$ and $i \in M(G)$, we can similar obtain $V_j^{\log f + 1} \subseteq V_i^{\log f + 1}$. 
\end{proof}
\end{lemma}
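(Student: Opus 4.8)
The plan is to reduce \textbf{Comparability} to a set-inclusion statement. Since each correct process outputs $y_i = \sqcup\{v \in V_i^{\log f + 1}\}$ and the join operation is monotone with respect to set inclusion, it suffices to show that for any two correct processes $i$ and $j$, either $V_i^{\log f + 1} \subseteq V_j^{\log f + 1}$ or $V_j^{\log f + 1} \subseteq V_i^{\log f + 1}$. If $i$ and $j$ remain in a common group all the way through round $\log f + 1$, then Lemma \ref{lem:same_group} gives $V_i^{\log f + 1} = V_j^{\log f + 1}$ directly, so I would concentrate on the case in which the two processes separate at some round.

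First I would locate the round $r$ at which $i$ and $j$ diverge: let $G$, with label $k$, be the last group containing both, and assume without loss of generality that at round $r$ process $i$ is classified as a slave (entering $S(G)$ with label $s(k,r)$) while $j$ is classified as a master (entering $M(G)$ with label $m(k,r)$). The target is then the chain $V_i^{\log f + 1} \subseteq SF_{s(k,r)}^{r+1} \subseteq V_j^{\log f + 1}$, which yields $y_i \leq y_j$. For the left inclusion I would follow the value set of the slave $i$ down its chain of descendant groups: applying property $(p8)$ of Lemma \ref{lem:cls_logf} at round $\log f$ bounds $V_i^{\log f + 1}$ by the aggregate safe set of $i$'s group at round $\log f$, and then the monotone shrinking of aggregate safe sets along the chain, provided by repeated application of $(p3)$ and $(p4)$ across the successive labels, pulls this back to $SF_{s(k,r)}^{r+1}$.

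For the right inclusion I would invoke property $(p7)$, which puts every value of $SF_{s(k,r)}^{r+1}$ into $V_j^{r+1}$, and then argue that such values persist until round $\log f + 1$. The mechanism is that any $v \in SF_{s(k,r)}^{r+1}$ was gradecast with score $2$ by some correct process, hence receives score at least $1$ from every correct process by property $2$ of \textbf{SetGradecast}; consequently $v$ enters the master safe set $F_p[m(k,r)] = S_p^{r+1}[j]$ of every correct $p$ at line $14$. With $v$ simultaneously in $V_j^{r+1}$ and in the safe set of every correct process for $j$, Lemma \ref{lem:correct_value_log_f} makes $v$ sticky, so $v \in V_j^t$ for all $t \geq r+1$, in particular $v \in V_j^{\log f + 1}$. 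This establishes $SF_{s(k,r)}^{r+1} \subseteq V_j^{\log f + 1}$ and completes the chain.

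The main obstacle I anticipate is precisely this persistence argument on the master side: it is not enough for a slave value to appear in the master's value set at round $r+1$, since a later classification step could in principle discard it. Securing that the safe-set membership of $v$ at \emph{every} correct process is preserved as $j$ recurses — which is exactly the hypothesis Lemma \ref{lem:correct_value_log_f} needs and what the pairing of $(p7)$ with property $2$ of \textbf{SetGradecast} supplies — is the delicate point. A secondary subtlety is making the ``repeatedly apply $(p3),(p4),(p8)$'' induction precise across the varying labels $s(\cdot,\cdot)$ and $m(\cdot,\cdot)$ of the descendant groups, so that the telescoping of aggregate safe sets is rigorous rather than merely suggestive.
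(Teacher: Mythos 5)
Your proposal is correct and follows essentially the same route as the paper's proof: reduce comparability to set inclusion, telescope the slave's value set back to the aggregate safe set of the slave subgroup at the divergence round via $(p8)$, $(p3)$, $(p4)$ of Lemma \ref{lem:cls_logf}, and then push that safe set forward into the master's value set using $(p7)$ together with property 2 of \textbf{SetGradecast} and the persistence guarantee of Lemma \ref{lem:correct_value_log_f}. The only differences are cosmetic: you cite $(p7)$ where the paper re-derives its content inline, and you explicitly dispatch the never-diverging case via Lemma \ref{lem:same_group}, which the paper leaves implicit.
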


\logfTheorem*

\begin{proof}
{\bf Downward-Validity}. After the initial round, the input $x_i$ of process $i$ must be in the safe value set of each correct process for process $i$. Then, by Lemma \ref{lem:correct_value_log_f}, $x_i \in V_i^{\log f + 1}$. Thus, $x_i \leq y_i$. 

{\bf Comparability} follows from Lemma \ref{lem:comp_log_f}. 

{\bf Upward-Validity}. After the initial round, we have $SF_{k_0}^1 \subseteq \{x_i ~|~ i \in C\} \cup B$, where $B \subset X$ and $|B| \leq t$, since each process can introduce at most one value in $SF_{k_0}^1$ by property 1 of gradecast. Since $V_j^{\log f + 1} \subseteq SF_{k_0}^1$ for each $j \in C$, we have $\sqcup \{y_i ~ | ~ i \in C\} \leq \sqcup(\{x_i ~ | ~ i \in C\} \cup B)$, where $B \subset X $ and $|B| \leq t$.  \\
\end{proof}

\begin{corollary}
There is a $5\log f + 4$ rounds algorithm for the authenticated BLA problem in synchronous systems which can tolerate $f < \frac{n}{2}$ Byzantine failures, where $n$ is the number of processes in the system. The algorithm takes $O(n^2 \log n)$ messages. 
\end{corollary}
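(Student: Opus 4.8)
The plan is to obtain the three correctness properties by assembling the structural lemmas already proved for the classifier, and then to count rounds and messages directly from the pseudocode in Fig.~\ref{fig:logf_algo}. For \textbf{Downward-Validity} I would first note that the initial Gradecast (lines~1--6) places each correct process's own input $x_i$ into its value set $V_i^1$ and, since a correct leader's value receives score $2$ at every correct process by property~1 of \textbf{SetGradecast}, also into $S_j^1[i]$ for every correct $j$. Lemma~\ref{lem:correct_value_log_f} then shows such a value is never evicted, so $x_i \in V_i^{\log f + 1}$ and $x_i \le y_i$. \textbf{Comparability} I would read off Lemma~\ref{lem:comp_log_f}: for correct $i,j$, take the last common group $G$ with, say, $i \in S(G)$ and $j \in M(G)$ at round $r$; then $V_i^{\log f + 1} \subseteq SF_{S(G)}^r$, while properties (p4) and (p7) of Lemma~\ref{lem:cls_logf} together with the persistence Lemma~\ref{lem:correct_value_log_f} give $SF_{S(G)}^r \subseteq V_j^{\log f + 1}$, so the value sets nest and their joins are comparable.

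For \textbf{Upward-Validity} I would bound the single global object $SF_{k_0}^1$. By property~2 of Gradecast, consistent grading forces each (possibly Byzantine) process to contribute at most one value to the safe sets in the initial round, so $SF_{k_0}^1 \subseteq \{x_i \mid i \in C\} \cup B$ with $|B| \le t$. It then remains to push the inclusion $V_i^{\log f + 1} \subseteq SF_{k_0}^1$ down the recursion: repeatedly applying (p3) and (p4) of Lemma~\ref{lem:cls_logf} shows the safe sets only shrink along any root-to-leaf path of the group tree, and (p8) bounds each final value set by the safe set of its group, which chains back to $SF_{k_0}^1$. Taking joins yields $\sqcup\{y_i \mid i \in C\} \le \sqcup(\{x_i \mid i \in C\} \cup B)$. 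Termination within $\log f$ levels is the remaining ingredient: Lemma~\ref{lem:dec} shows the knowledge gap of a group halves at each level, so that after $\log f$ levels it is at most $1$, whence Lemma~\ref{lem:same_group} forces equal value sets inside a group while Comparability handles distinct groups.

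Finally, for the complexity I would count $3$ sub-rounds for the initial Gradecast plus, per iteration, $3$ sub-rounds of \textbf{SetGradecast} and one extra round for the safe-set exchange at line~15, giving $3 + 4\log f$ rounds; the message count follows from $O(n^2)$ per \textbf{SetGradecast} instance, $n$ instances per iteration, and the standard trick of packing the $n$ concurrent instances into one vector message per ordered pair, which cuts a round's cost from $O(n^3)$ to $O(n^2)$ and yields $O(n^2 \log f) \subseteq O(n^2 \log n)$ overall. I expect the main obstacle to be \textbf{Upward-Validity}, since it is the one property that cannot be argued locally: one must simultaneously track that the safe maps never grow down the recursion and that the line-15 classification exchange does not let a Byzantine process smuggle in a value outside $SF_{k_0}^1$, so that the bound $|B| \le t$ established in the initial round is preserved all the way to the leaves.
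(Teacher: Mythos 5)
Your proposal proves the wrong statement. What you have written is, in substance, a re-derivation of Theorem~\ref{logf_theorem} --- the \emph{unauthenticated} $4\log f + 3$ round algorithm with resilience $f < \frac{n}{3}$ --- and indeed your own round count comes out to $3 + 4\log f$. The corollary in question is about the \emph{authenticated} setting: it claims $5\log f + 4$ rounds and, crucially, the improved resilience $f < \frac{n}{2}$. Nothing in your argument engages with either change. The entire content of the corollary is the substitution of the three-round Gradecast primitive by the four-round Gradecast algorithm for the authenticated setting due to Katz et al.~\cite{katz2006expected}, which guarantees the same three gradecast properties under the weaker assumption $n \geq 2f + 1$ (this is exactly how the paper handles the analogous corollary for the $O(\sqrt{f})$ algorithm). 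One then modifies that primitive with the same safe-set filtering, observes that every use of the gradecast properties in Lemmas~\ref{lem:cls_logf}--\ref{lem:comp_log_f} goes through verbatim, and recounts rounds: $4$ for the initial gradecast plus $4 + 1 = 5$ per iteration of the loop in Fig.~\ref{fig:logf_algo}, giving $5\log f + 4$.

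There is also a correctness issue lurking if you tried to patch this by keeping your argument as is: several steps of the classifier analysis (e.g., the frequency thresholds $n - f$ and $f + 1$ in the grading, and the claim that a score of $2$ at one correct process forces a score of at least $1$ at every other) are proved in the paper using $n \geq 3f + 1$. To get $f < \frac{n}{2}$ you cannot reuse those counting arguments; you must inherit the gradecast properties as a black box from the authenticated primitive, which uses signatures to achieve them with only an honest majority. So the missing idea is not a technical refinement of your proof --- it is the one idea the corollary is actually about.
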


\section{Conclusion} 
We have presented three algorithms for the Byzantine lattice agreement problem in synchronous systems. The first algorithm takes at most $\min\{3h(X) + 6, 6\sqrt{f} + 6\}$ rounds and has early stopping property. The second algorithm takes $3\log n + 3$ rounds. The third algorithm takes $4 \log f + 3$ rounds. The latter two algorithms do not have early stopping property. The $O(\log f)$ rounds upper bound matches what we have for the crash failure setting. For future work, the following questions are interesting: 1) Can we improve the upper bound or prove some lower bound on the round complexity? There does not exist any lower bound for the lattice agreement problem at all, even in the crash failure model. 2) Can we improve the message complexity? 3) Is there an algorithm for the authenticated Byzantine lattice agreement problem which can tolerate any number of failures and takes less than $f + 1$ rounds? It is known to be impossible for the authenticated Byzantine agreement problem~\cite{dolev1983authenticated}.

\bibliography{ref}

\end{document}